\newtheorem{conjecture}{Conjecture}
\newtheorem{theorem}{Theorem}
\newtheorem{lemma}{Lemma}
\newtheorem{proposition}{Proposition}
\newtheorem{remark}{Remark}
\newtheorem{example}{Example}
\newcommand\zz{\ensuremath{\mathbb{Z}}}
\newcommand\kk{\ensuremath{\mathbb{K}}}
\newcommand\p{\ensuremath{\mathsf P}}
\newcommand\vp{\ensuremath{\mathsf{VP}}}
\newcommand\np{\ensuremath{\mathsf{NP}}}
\newcommand\vnp{\ensuremath{\mathsf{VNP}}}
\newcommand\Newt{\ensuremath{\mathrm{Newt}}}
\newcommand\Mon{\ensuremath{\mathrm{Mon}}}
\newcommand\conv{\ensuremath{\mathrm{conv}}}
\title{A $\tau$-Conjecture for Newton Polygons\\
}
\author{Pascal Koiran, Natacha Portier, S\'ebastien Tavenas, St\'ephan Thomass\'e\\
LIP\thanks{UMR 5668 ENS Lyon, CNRS, UCBL, INRIA.
Email: {\tt [Pascal.Koiran, Natacha.Portier, Sebastien.Tavenas, Stephan.Thomasse]@ens-lyon.fr.} The authors are supported by ANR project CompA.}, \'Ecole Normale Sup\'erieure de Lyon, Universit\'e de Lyon.
}
\begin{document}

\maketitle

\begin{abstract}
One can associate to any bivariate polynomial $P(X,Y)$ 
its Newton polygon. This is 
the convex hull of the points $(i,j)$ such that the monomial $X^i Y^j$
appears in $P$ with a nonzero coefficient.
We conjecture that when~$P$ is expressed as a sum of products of 
sparse polynomials, the number of edges of its Newton polygon 
is polynomially bounded in the size of such an expression. 
We show that this ``$\tau$-conjecture for Newton polygons,'' even in a weak form,  implies
that the permanent polynomial is not computable by polynomial 
size arithmetic circuits. We make the same observation for a weak version 
of an earlier ``real $\tau$-conjecture.''
Finally, we make some progress 
toward the $\tau$-conjecture for  
Newton polygons using recent results
from combinatorial geometry.
\end{abstract}

\section{Introduction}

Let $f \in \zz[X]$ be a univariate polynomial computed by an arithmetic circuit
(or equivalently, a straight-line program) of size $s$ starting from the
variable $X$ and the constant $1$.
According to Shub and Smale's $\tau$-conjecture~\cite{ShubSmale}, 
the number of integer roots
of $f$ should be bounded by a fixed polynomial function of $s$.
It was shown in~\cite{ShubSmale} that the $\tau$-conjecture implies
a $\p \neq \np$ result for the Blum-Shub-Smale model of computation 
over the complex numbers~\cite{BSS89,BCSS96}.
A similar result was obtained by B\"urgisser~\cite{Burg09} 
for another algebraic version
of $\p$ versus $\np$ put forward by Valiant~\cite{Valiant79,Valiant82} 
at the end of the 1970's.
A succinct way of stating this $\vp$ versus $\vnp$ problem goes as follows: 
can we compute the permanent of a $n \times n$ matrix  with a  number  
of arithmetic operations which is polynomial in~$n$? 
This question can be formalized using the computation model of arithmetic
circuits. The permanent plays a special role here 
because it is $\vnp$-complete, and it can be replaced by any other $\vnp$-complete family of polynomials.
We refer to B\"urgisser's book~\cite{Burg} 
for an introduction to this topic and
to two recent surveys on arithmetic circuit complexity 
by Shpilka and Yehudayoff~\cite{SY10} 
and by Chen, Kayal and Wigderson~\cite{CKW11}.

As a natural approach to the $\tau$-conjecture, one can try to bound 
the number of real roots instead of the number of integer roots.
This fails miserably since the number of real roots of a univariate polynomial
can be exponential in its arithmetic circuit size. Chebyshev polynomials
provide such an example~\cite{SmaleProblems} (a similar example was provided
earlier by Borodin and Cook~\cite{BC76}).
A real version of the $\tau$-conjecture was nevertheless proposed 
in~\cite{Koi10a}. In order to avoid the aforementioned counterexamples, 
the attention is restricted to arithmetic circuits of
a special form: the sums of products of sparse polynomials.
In spite of this restriction, the real $\tau$-conjecture still implies
that the permanent is hard to compute 
for general arithmetic circuits~\cite{Koi10a}.

In this paper, we propose a $\tau$-conjecture for Newton polygons of bivariate polynomials.
Like the real $\tau$-conjecture, it deals with sums of products 
of sparse polynomials and implies that the permanent is hard 
for general arithmetic circuits.
A common idea to these three $\tau$-conjecture is that ``simple'' arithmetic 
circuits should compute only ``simple'' polynomials. 
In the original $\tau$-conjecture, the simplicity of a polynomial
is measured by the number of its integer roots; in the real $\tau$-conjecture
it is measured by the number of its real roots; and in our new conjecture 
by the number of edges of its Newton polygon. These conjectures
are independent in the sense that we know of no implication
between two of them.

\subsection*{Organization of the paper}

In Section~\ref{polygons} we review some basic facts about Newton polygons and
formulate the corresponding $\tau$-conjecture.
We also state in Theorem~\ref{permanent} the motivating result for this paper: a proof of the conjecture,
even in a very weak form, implies a lower bound for the permanent.
In Section~\ref{proof} we give a proof of this result and of a refinement:
it suffices to work with sums of {\em powers} of sparse polynomials
rather than with sums of arbitrary products. 
We also point out that this refinement applies to the real $\tau$-conjecture
from~\cite{Koi10a}, and that (like in Theorem~\ref{permanent}) 
a very weak form of this conjecture implies a lower bound for the permanent.
These observations improve the results stated in~\cite{Koi10a}.
In Section~\ref{bounds} we use a recent result 
of convex geometry~\cite{EPRS08}
to provide nontrivial 
upper bounds on the number of edges
of Newton polygons. Our results fall short of establising the new
$\tau$-conjecture  (even in the weak form required
by Theorem~\ref{permanent}) but they improve significantly 
on the naive bound obtained by brute-force expansion.
For instance, as a very special case of our results we have that
the Newton polygon of $fg+1$ has $O(t^{4/3})$ edges if the bivariate 
polynomials $f$ and $g$ have at most $t$ monomials. 
The straightforward bound obtained by expanding the 
product $fg$ is only $O(t^2)$.
We conclude the paper with a couple of open problems.
In particular, we ask whether this $O(t^{4/3})$ upper bound is optimal.
In the appendix, we improve on this upper bound by giving a linear upper bound
in a  special case.

\section{Newton Polygons} \label{polygons}

In the following we will consider a field $\kk$ of characteristic
$0$. In fact, most of the results of this paper are valid for a general field.

We first recall some standard background on Newton polygons. Much more can be found in the survey~\cite{Sturmfels98}.
Consider a bivariate polynomial  $f  \in \kk[X,Y]$. To each monomial $X^i Y^j$ appearing in $f$ with a nonzero coefficient we associate the point with coordinates $(i,j)$ in the Euclidean plane. 
We denote by $\Mon(f)$ this finite set of points.
If $A$ is a set of points in the plane, we denote by $\conv(A)$ the convex
hull of $A$.
By definition the Newton polygon of $f$, denoted $\Newt(f)$, is the convex hull of 
$\Mon(f)$ (In particular, $\Newt(f)=\conv(\Mon(f))$).
Note that $\Newt(f)$ has at most $t$ edges if $f$ has $t$ monomials.
 It is well known~\cite{Ostro75} that the Newton polygon of a product of polynomials is the Minkowski sum of their Newton polygons, i.e.,
$$\Newt(fg)=\Newt(f)+\Newt(g)=\{p+q;\ p \in \Newt(f),\ q \in \Newt(g)\}.$$
A short proof of this fact can be found in~\cite{Gao01}.
As a result, if $f$ has $s$ monomials and $g$ has $t$ monomials 
then $\Newt(fg)$ has at most $s+t$ edges.
More generally, for a product $f=g_1g_2\cdots g_m$,
$\Newt(f)$ has at most $\sum_{i=1}^m t_i$ edges where $t_i$ is the number
of monomials of $g_i$; but $f$ can of course have up to $\prod_{i=1}^m t_i$
monomials.
The number of edges of a Newton polygon is therefore easy
to control for a product of polynomials.
In contrast,  the situation is not at all clear
for a sum of products.
We propose the following conjecture.

\begin{conjecture}[$\tau$-conjecture for Newton polygons] \label{tauconj}
There is a polynomial $p$ such that the following property holds.

Consider   any bivariate polynomial
 $f  \in \kk[X,Y]$ 
 of the form
\begin{equation} \label{sps}
f(X,Y)=\sum_{i=1}^k \prod_{j=1}^m f_{ij}(X,Y)
\end{equation}
where the $f_{ij}$ have at most $t$ monomials.
Then the Newton polygon  of  $f$ has at most $p(kmt)$ edges.
\end{conjecture}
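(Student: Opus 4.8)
\medskip
\noindent\textbf{Plan of attack.} Since this is a conjecture, what follows is a plan together with the point where it currently stalls; even the weak form required for the permanent lower bound (Theorem~\ref{permanent}) is open. First I would reduce everything to bounding the number of vertices (equivalently, edges) of the polygon $\Newt(f)=\conv(\Mon(f))$ when $f$ has the form~\eqref{sps}. Write $P_i=\prod_{j=1}^m f_{ij}$, so that $f=\sum_{i=1}^k P_i$; by the Minkowski-sum formula recalled above, $\Newt(P_i)=\sum_{j=1}^m\Newt(f_{ij})$ is a convex polygon with at most $mt$ edges, so each $\Mon(P_i)$ lies inside a lattice polygon with at most $mt$ edges. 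The key structural point is that if no cancellation occurred among the $k$ summands then $\Mon(f)=\bigcup_i\Mon(P_i)$ and hence $\Newt(f)=\conv(\bigcup_i\Newt(P_i))$, a polygon with at most $kmt$ edges --- far stronger than conjectured. So the whole difficulty is cancellation: extreme monomials of $\bigcup_i\Mon(P_i)$ may vanish in the sum, promoting interior monomials of some $\Mon(P_i)$ --- of which there can be up to $t^m$ --- to vertices of $\Newt(f)$.

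The plan is to control this one supporting direction at a time. The face of $\Newt(f)$ exposed by a primitive normal $\theta\in\zz^2$ is $\Newt(f_\theta)$, where $f_\theta$ is the $\theta$-leading form of $f$. Since leading forms are multiplicative for Minkowski sums, the natural candidate for $f_\theta$ is $\sum_{i\in I}\prod_j (f_{ij})_\theta$, where $I$ indexes the $P_i$ attaining the overall $\theta$-maximum and each $(f_{ij})_\theta$ is supported on a vertex or a single segment, hence is --- after clearing a monomial --- a one-variable polynomial with at most $t$ terms; so this candidate is, up to a monomial, a \emph{univariate} sum of products of sparse polynomials, and it is the true $f_\theta$ unless it cancels to zero, in which case $f_\theta$ drops to a deeper level. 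In all cases $\theta$ contributes an edge exactly when $f_\theta$ is not a monomial, and the number of edges of $\Newt(f)$ is the number of maximal arcs of directions on which this holds. To bound this I would bring in combinatorial geometry: the vertices of $\Newt(f)$ form a lattice point set constrained to the union of the $k$ lattice polygons $\Newt(P_i)$, each with at most $mt$ edges, and the recent result of~\cite{EPRS08} bounds the number of points in convex position in such a constrained configuration well below the brute-force count $kt^m$, with a fractional-exponent gain of the kind familiar from lattice-polygon estimates. For $m=2$ this already yields bounds of the shape $O(t^{4/3})$, for instance for $fg+1$, improving on the naive $O(t^2)$ obtained by expanding the product.

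The hard part --- and the reason the conjecture is still open --- will be that this scheme degrades badly as the product-depth $m$ grows: once the top-level candidate cancels, $f_\theta$ becomes a depth-$3$-like univariate object whose behaviour across directions $\theta$ I cannot control, and the combinatorial-geometry bounds for unions of $k$ lattice polygons with $mt$ edges depend on the \emph{size} of those polygons --- hence on the exponents occurring in the $f_{ij}$ --- and are nowhere near polynomial in $kmt$. The refinement to sums of \emph{powers} does not obviously help, since $f_i^{e_i}$ still has a Newton polygon with up to $t$ edges and the sum can still cancel massively. I would therefore first settle the $m=2$ case --- which already has strong consequences for the permanent --- and then try to drive the $4/3$ exponent down toward the linear bound obtained in the appendix, presumably by replacing the generic combinatorial input with an argument that exploits the lattice structure of $\Mon(f)$ directly.
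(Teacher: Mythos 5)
The statement you were asked about is a conjecture, and the paper offers no proof of it; you correctly recognize this, and your partial-progress sketch essentially reproduces Section~\ref{bounds} of the paper: split each product into two halves, apply the Eisenbrand--Pach--Rothvo\ss--Sopher theorem (Theorem~\ref{convex}) to the resulting two-factor Minkowski sums, and obtain $O\bigl(k(r^{2/3}s^{2/3}+r+s)\bigr)$ for $\sum_i f_i g_i$, hence $O(t^{4/3})$ for $fg+1$ and $O(k\,t^{2m/3})$ in general (Theorems~\ref{2polys} and~\ref{mpolys}); the appendix's linear bound for $fg+1$ under extra hypotheses is also as you describe. The leading-form/normal-fan decomposition you propose --- analyzing, for each primitive direction $\theta$, the $\theta$-leading form $f_\theta$ as a univariate sum of products --- is a genuinely different angle from the paper's, which works directly with convexly independent subsets of $\Mon(f_ig_i)$ and never passes to faces; the paper's route is more immediately quantitative, whereas yours could complement it if one could control how often the top-level leading form cancels across directions.

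Two inaccuracies in your diagnosis of where the plan stalls are worth correcting. First, the EPRS-type bound depends \emph{only on the cardinalities} of the summand point sets, not on coordinate sizes or on the exponents occurring in the $f_{ij}$; the true obstruction, which the paper establishes in Proposition~\ref{lb} via the grid construction of Example~\ref{base}, is that a convexly independent subset of an $m$-fold Minkowski sum of $t$-point planar sets can already have size $t^{\Omega(m)}$, so no argument based purely on the cardinalities of the $\Mon(f_{ij})$ can beat $t^{\Theta(m)}$. Second, settling the $m=2$ case alone would not give the permanent lower bound: the reduction in the proof of Theorem~\ref{permanent} produces representations with $m=\Theta(\sqrt n)$ (and $k,t$ quasi-polynomial in $n$), so the growth in $m$ is precisely what the hypothesis $2^{(m+\log kt)^c}$, $c<2$, is calibrated to tolerate, and controlling that growth is the crux.
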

Note that the bound only depends on $k$, $m$ and $t$, and so does
not depend on the degrees and coefficients of the polynomials $f_{i,j}$.

The ``real $\tau$-conjecture''~\cite{Koi10a} is a similar conjecture
for real roots of sums of products of sparse\footnote{Here and
  in~\cite{Koi10a}, the term ``sparse'' refers to the fact that we
  measure the size of a polynomial $f_{ij}$ by the number of its
  monomials.} univariate polynomials, and it implies that the
permanent does not have polynomial-size arithmetic circuits. We are not
able to compare these two conjectures. However, as we shall see shortly,
Conjecture~\ref{tauconj} also implies that the permanent does not have
polynomial-size arithmetic circuits.

One can also formulate a $\tau$-conjecture for the multiplicities of
(complex) nonzero roots of univariate polynomials
(see Corollary~2.4.1 in~\cite{Hrubes13}). This
conjecture again implies that the permanent is hard for arithmetic circuits.

By expanding the products in~(\ref{sps}) we see that $f$ has at most $k.t^m$ monomials,
and this is an upper bound on the number of edges of its Newton polygon.
In  order to improve this very coarse bound, the main difficulty 
we have to face is that the $k$-fold summation in the definition of 
$f$ may lead to cancellations of monomials. As a result, some of the vertices
of $\Newt(f)$ might not be vertices of the Newton polygons 
of any of the $k$ products $\prod_{j=1}^m f_{ij}(X,Y)$.
We give two examples of such cancellations below.
If there are no cancellations (for instance, if the $f_{ij}$ only have positive coefficients) then we indeed have a polynomial upper bound. 
In this case, $\Newt(f)$ is the convex hull of the union
of the Newton polygons of the $k$ products. Each of these $k$ Newton polygons
has at most $mt$ vertices, so $\Newt(f)$ has at most $kmt$ vertices and
as many edges.

\begin{example}
We define $A(X,Y)=XY+X^2+X^2Y^2+X^3Y+X^5Y$, $B(X,Y)=1+XY^2$, $C(X,Y)=-X-XY-X^2Y^2$
and $D(X,Y)=Y+X+X^2Y+X^4Y$.
\begin{center}
\begin{tikzpicture}
  \draw[step=.5cm,gray,very thin] (-0.4,-0.4) grid (3.4,2.4);
  \draw (-0.5,0) -- (3.5,0);
  \draw (0,-0.5) -- (0,2.5);
  \fill [red] (0.5cm,0.5cm) circle (2pt);
  \fill [red] (1cm,1cm) circle (2pt);
  \fill [red] (1.5cm,0.5cm) circle (2pt);
  \fill [red] (1cm,0cm) circle (2pt);
  \fill [red] (2.5cm,0.5cm) circle (2pt);
  \fill [red] (1cm,1.5cm) circle (2pt) node[black, above left=2pt] {$AB$};
  \fill [red] (1.5cm,2cm) circle (2pt);
  \fill [red] (2cm,1.5cm) circle (2pt);
  \fill [red] (1.5cm,1cm) circle (2pt);
  \fill [red] (3cm,1.5cm) circle (2pt);
  \filldraw[fill=green!20!white, draw=green!50!black, nearly transparent]
    (0.5,0.5) -- (1,0) -- (2.5,0.5) -- (3,1.5) -- (1.5,2) -- (1,1.5) -- cycle;
    \fill [red] (0cm,-1cm) circle (2pt) node[black, right] {point of
    $\Mon(AB)$};
\end{tikzpicture}
\begin{tikzpicture}
  \draw[step=.5cm,gray,very thin] (-0.4,-0.4) grid (3.4,2.4);
  \draw (-0.5,0) -- (3.5,0);
  \draw (0,-0.5) -- (0,2.5);
  \fill [brown] (0.5cm,0.5cm) circle (2pt);
  \fill [brown] (0.5cm,1cm) circle (2pt);
  \fill [brown] (1cm,1.5cm) circle (2pt);
  \fill [brown] (1cm,0cm) circle (2pt);
  \fill [brown] (1cm,0.5cm) circle (2pt);
  \fill [brown] (1.5cm,1cm) circle (2pt);
  \fill [brown] (1.5cm,0.5cm) circle (2pt);
  \fill [brown] (1.5cm,1cm) circle (2pt);
  \fill [brown] (2cm,1.5cm) circle (2pt) node[black, above=2pt] {$CD$};
  \fill [brown] (2.5cm,0.5cm) circle (2pt);
  \fill [brown] (2.5cm,1cm) circle (2pt);
  \fill [brown] (3cm,1.5cm) circle (2pt);
  \filldraw[fill=green!20!white, draw=green!50!black, nearly transparent]
    (0.5,0.5) -- (1,0) -- (2.5,0.5) -- (3,1.5) -- (1,1.5) -- (0.5,1)
    -- cycle;
    \fill [brown] (0cm,-1cm) circle (2pt) node[black, right] {point of
    $\Mon(CD)$};
\end{tikzpicture}
\end{center}
Then,
\begin{align*}
  AB+CD = & (XY+X^2+X^2Y^2+X^3Y+X^5Y +
  X^2Y^3+X^3Y^2 +X^3Y^4 \\
  & +X^4Y^3+X^6Y^3)-(XY+X^2+X^3Y+X^5Y+XY^2 \\
  & +X^2Y+2X^3Y^2+X^5Y^2+X^2Y^3+X^4Y^3+X^6Y^3) \\
  = & X^2Y^2+X^3Y^4-XY^2-X^2Y-X^3Y^2-X^5Y^2 \\
\end{align*}
\begin{center}
\begin{tikzpicture}
  \draw[step=.5cm,gray,very thin] (-0.4,-0.4) grid (3.4,2.4);
  \draw (-0.5,0) -- (3.5,0);
  \draw (0,-0.5) -- (0,2.5);
  \fill (1cm,1cm) circle (2pt);
  \fill (1.5cm,2cm) circle (2pt) node[blue, left=5pt] {$AB+CD$} 
  node[green!80!black, right=0.5cm] {$AB$};
  \fill (0.5cm,1cm) circle (2pt);
  \fill [blue] (1cm,0.5cm) circle (2pt);
  \fill (1.5cm,1cm) circle (2pt);
  \fill [blue] (2.5cm,1cm) circle (2pt);
  \filldraw[fill=green!20!white, draw=black, nearly transparent]
    (0.5,0.5) -- (1,0) -- (2.5,0.5) -- (3,1.5) -- (1.5,2) -- (1,1.5) -- cycle;
  \filldraw[fill=blue!20!white, draw=blue!50!black, nearly transparent]
    (0.5,1) -- (1,0.5) -- (2.5,1) -- (1.5,2) -- cycle;
\end{tikzpicture}
\begin{tikzpicture}
  \draw[step=.5cm,gray,very thin] (-0.4,-0.4) grid (3.4,2.4);
  \draw (-0.5,0) -- (3.5,0);
  \draw (0,-0.5) -- (0,2.5);
  \fill (1cm,1cm) circle (2pt);
  \fill (1.5cm,2cm) circle (2pt) node[blue, left=5pt] {$AB+CD$}
  node[green!80!black, right=0.8cm] {$CD$};
  \fill (0.5cm,1cm) circle (2pt);
  \fill [blue] (0.9cm,0.4cm) rectangle (1.1cm,0.6cm);
  \fill (1.5cm,1cm) circle (2pt);
  \fill [blue] (2.4cm,0.9cm) rectangle (2.6cm,1.1cm);
  \filldraw[fill=green!20!white, draw=black, nearly transparent]
    (0.5,0.5) -- (1,0) -- (2.5,0.5) -- (3,1.5) -- (1,1.5) -- (0.5,1) -- cycle;
  \filldraw[fill=blue!20!white, draw=blue!50!black, nearly transparent]
    (0.5,1) -- (1,0.5) -- (2.5,1) -- (1.5,2) -- cycle;
    \fill [blue] (3.4cm,1.4cm) rectangle (3.6cm,1.6cm);
    \fill [black] (3.7cm,1.5cm) circle (2pt) node[black, right] {point of
    $\Mon(AB+CD)$};
\end{tikzpicture}
\end{center}
The two rectangle points lie on the convex hull of $\Mon(AB+CD)$, but do not lie
on the convex hulls of $\Mon(AB)$ or $\Mon(CD)$. 
\end{example}
\begin{example}
We define $f(X,Y)=1+X^2Y+Y^2X$, $g(X,Y)=1+X^4Y+XY^4$ and we consider $\Mon(fg-1)$.
\begin{center}
\begin{tikzpicture}
  \draw[step=.5cm,gray,very thin] (-0.4,-0.4) grid (3.4,3.4);
  \draw (-0.5,0) -- (3.5,0);
  \draw (0,-0.5) -- (0,3.5);
  \fill [red] (0cm,0cm) circle (2pt);
  \fill [blue] (0.9cm,0.4cm) rectangle (1.1cm,0.6cm);
  \fill [blue] (0.4cm,0.9cm) rectangle (0.6cm,1.1cm);
  \fill [red] (2cm,0.5cm) circle (2pt);
  \fill [red] (3cm,1cm) circle (2pt);
  \fill [red] (2.5cm,1.5cm) circle (2pt);
  \fill [red] (0.5cm,2cm) circle (2pt);
  \fill [red] (1cm,3cm) circle (2pt);
  \fill [red] (1.5cm,2.5cm) circle (2pt);
  \filldraw[fill=green!20!white, draw=green!50!black, nearly transparent]
    (0,0) -- (2,0.5) -- (3,1) -- (1,3) -- (0.5,2) -- cycle;
  \filldraw[fill=blue!20!white, draw=blue!50!black, nearly transparent]
    (0.5,1) -- (1,0.5) -- (2,0.5) -- (3,1) -- (1,3) -- (0.5,2) -- cycle;
    \fill [blue] (3.4cm,1.4cm) rectangle (3.6cm,1.6cm);
    \fill [red] (3.7cm,1.5cm) circle (2pt) node[black, right] {point of
    $\Mon(fg)$};
\end{tikzpicture}
\end{center}
The two rectangle points lie on the convex hull of $\Mon(fg-1)$, but do not lie
on the convex hull of $\Mon(fg)$. 

\end{example}

Conjecture~\ref{tauconj} implies that the permanent is hard for arithmetic circuits. In fact, a significantly weaker bound on the number of edges would be sufficient:
\begin{theorem} \label{permanent}
Assume that for some universal constant $c<2$,
 the upper bound $2^{(m+ \log kt)^c}$ on the number of edges
of $\Newt(f)$ holds true for polynomials of the form~(\ref{sps})
whenever the product $kmt$ is sufficiently large.
Then the permanent is not computable by polynomial size arithmetic
circuits.
\end{theorem}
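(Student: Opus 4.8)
The natural plan is to argue by contraposition: assuming that the permanent of an $n\times n$ matrix is computable by an arithmetic circuit of size $n^{O(1)}$, I will exhibit a family of polynomials of the form~(\ref{sps}), with $kmt\to\infty$, whose Newton polygons have more than $2^{(m+\log kt)^c}$ edges, contradicting the assumed bound.

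First I would fix the hard object. For each $n$, let $g_n(X,Y)=\sum_{i=0}^{2^n-1}X^i\,Y^{\binom i2}$; the points $(i,\binom i2)$ are in strictly convex position (their second differences are all equal to $1$), so $\Newt(g_n)$ has exactly $2^n$ edges, and all its exponents are bounded by $2^{O(n)}$. Writing exponents in binary and introducing the \emph{high-powered inputs} $X,X^2,X^4,\dots$ and $Y,Y^2,Y^4,\dots$, the polynomial $g_n$ is the image, under the substitution $u_a\mapsto X^{2^a}$, $v_b\mapsto Y^{2^b}$, of a multilinear polynomial $\widehat g_n$ in $O(n)$ variables whose coefficient function is computable in $\mathsf P$; by Valiant's criterion, $\widehat g_n\in\vnp$. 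Since the permanent is $\vnp$-complete under $p$-projections, the hypothesis gives $\widehat g_n\in\vp$, so this polynomial of degree $O(n)$ in $O(n)$ variables has a circuit of size $\mathrm{poly}(n)$. Applying the known depth-four reduction of arithmetic circuits, $\widehat g_n=\sum_i\prod_j h_{ij}$ with the $h_{ij}$ sparse, with number of factors $m=O(\sqrt n)$ and $\log(kt)=O(\sqrt n\log n)$. Substituting the high-powered monomials back in cannot increase $k$, $m$, or the number of monomials of any factor, so $g_n$ is itself of the form~(\ref{sps}) with $m+\log kt=O(\sqrt n\log n)$. The assumed bound then forces $2^n\le 2^{(m+\log kt)^c}=2^{O((\sqrt n\log n)^c)}$, i.e.\ $n\le O\bigl(n^{c/2}(\log n)^c\bigr)$, which fails for $n$ large precisely because $c<2$: the constant $2$ is the reciprocal of the exponent $\tfrac12$ in the product fan-in of the depth-four reduction, so improving it would amount to improving that reduction. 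For the refinement, rewrite each $\prod_{j=1}^m h_{ij}$ as a sum of $m$-th powers by Fischer's identity $\prod_{j=1}^m z_j=\tfrac1{2^m m!}\sum_{T\subseteq[m]}(-1)^{m-|T|}\bigl(\sum_{j\in T}z_j\bigr)^m$: each base has at most $mt$ monomials, so $g_n$ becomes a sum of powers of sparse polynomials with the same order of magnitude for $m+\log kt$, and the argument is unchanged; the identical manipulation on univariate polynomials gives the analogous strengthening of the real $\tau$-conjecture of~\cite{Koi10a}.

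The step I expect to require the most care is the bookkeeping of the first paragraph: verifying that the chosen point set really is in strictly convex position with a $\mathsf{P}$-computable membership predicate, that the high-powered-input encoding indeed places $\widehat g_n$ in $\vnp$, and, crucially, that the monomial substitution composes correctly with the depth-four reduction without inflating any of $k$, $m$, $t$. The deeper reason one cannot simply write the expression down by hand — and the conceptual content of the theorem — is that a single product of $m$ sparse polynomials is a Minkowski sum and hence has a Newton polygon with only $O(mt)$ edges; the $2^n$ extreme monomials of $g_n$ must therefore be produced by cancellation among the $k$ summands, over which one has essentially no direct control, which is exactly why the argument is routed through the $\vnp$-definability of the coefficient pattern rather than through an explicit construction.
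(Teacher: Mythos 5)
Your argument follows exactly the paper's route: write a bivariate polynomial whose Newton polygon has $2^n$ vertices as a projection (via the high-powered inputs $X^{2^a},Y^{2^b}$) of a multilinear $\vnp$-family in $O(n)$ variables, invoke $\vnp$-completeness of the permanent plus the depth-four reduction with multiplication fan-in $O(\sqrt n)$, substitute the powers back to obtain form~(\ref{sps}) with $m=O(\sqrt n)$ and $\log kt$ bounded by a quasi-polynomial in $\sqrt n$, and then contradict the assumed $2^{(m+\log kt)^c}$ bound using $c<2$. The only difference is cosmetic — the paper uses exponents $Y^{i^2}$ where you use $Y^{\binom i2}$, both strictly convex — and your closing observation that cancellation among the $k$ summands is the true obstacle matches the paper's own motivation, so the proof is correct and essentially the same.
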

For instance, an upper bound of the form $2^{O(m)} (kt)^{O(1)}$ 
would be sufficient.
 Note that the parameter $m$ plays a very different role 
than the parameters $k$ and $t$.

\begin{remark}
  Theorem~\ref{permanent} holds true even in the case of a field of
  positive characteristic different from two. In characteristic two, the
  permanent is equal to the determinant and is therefore easy for
  arithmetic circuits. However, the conclusion of Theorem~\ref{permanent} 
remains true in characteristic two if we replace the permanent by 
the Hamiltonian polynomial (or any other polynomial family which is $\vnp$-complete in characteristic two).
\end{remark}

\section{Proof of Theorem~\ref{permanent},  and a Refinement} \label{proof}

Consider the polynomial 
\begin{equation} \label{hardpoly}
f_n(X,Y)=\sum_{i=0}^{2^n-1} X^iY^{i^2}.
\end{equation}
 The Newton polygon 
of $f_n$ has exactly $2^{n}$ edges.

Our proof of Theorem~\ref{permanent} is by contradiction. Assuming that
the permanent is computable by 
polynomial size arithmetic circuits
we will show that $f_n$ can be put under form~(\ref{sps}) with 
$k=n^{O(\sqrt{n} \log n)}$, $t=n^{O(\sqrt{n} \log n)}$ and $m=O(\sqrt{n})$.
Note that the upper bound on $m$ is much smaller than those on $k$ and $t$.
Then, from the assumption in Theorem~\ref{permanent} we conclude that
$\Newt(f_n)$ has at most $2^{(m+ \log kt)^c}$ edges. 
This is a contradiction since for large enough~$n$, this upper bound is 
smaller than the actual number of edges of $\Newt(f_n)$, namely, $2^n$
(here, we use the fact that the constant $c$ in Theorem~\ref{permanent} 
is smaller than 2).

Reduction of arithmetic circuits to depth 4 is an important ingredient 
in the proof of Theorem~\ref{permanent}. This phenomenon was discovered 
by Agrawal and Vinay~\cite{AgraVinay08}.
We will use it under the following form~\cite{Koi12} (recall that a 
depth 4 circuit is a sum of products of sums of products of inputs;
sum and product gates may have arbitrary fan-in).
\begin{theorem} \label{circuit2depth4}
Let $C$ be an arithmetic circuit of size $t$ computing a polynomial of 
degree $d$.
There is an equivalent depth four circuit $\Gamma$ of size
$t^{O(\sqrt{d} \log d)}$ with multiplication gates of fan-in $O(\sqrt{d})$.
\end{theorem}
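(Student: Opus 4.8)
\noindent The target is a depth-four ($\Sigma\Pi\Sigma\Pi$) circuit $\Gamma=\sum_i\prod_{j=1}^{a_i}g_{ij}$ in which each top product has fan-in $a_i=O(\sqrt d)$ and each $g_{ij}$ is a polynomial of degree $O(\sqrt d)$ written out as a sum of monomials (so its $\Sigma\Pi$ part has bottom fan-in $O(\sqrt d)$); the whole game is to keep the total number of summands bounded by $t^{O(\sqrt d\log d)}$. As a first step I would homogenize $C$, splitting each gate into its $d+1$ homogeneous components at the cost of a $\mathrm{poly}(d)$ blow-up and keeping all product gates of fan-in two, and then invoke the balancing theorem of Valiant, Skyum, Berkowitz and Rackoff to pass to an equivalent homogeneous formula of size $t^{O(\log d)}$ (equivalently, a balanced homogeneous circuit of size $\mathrm{poly}(t,d)$ whose multiplicative structure has depth $O(\log d)$). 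This step is precisely where the extra $\log d$ in the target exponent is spent; from now on one can argue purely combinatorially about this balanced object.

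\emph{The degree-$\sqrt d$ cut.} Put $\delta=\lceil\sqrt d\,\rceil$ and call a gate $v$ a \emph{frontier gate} if it computes a polynomial of degree $\ge\delta$ but is minimal with this property; then $v$ is a product gate whose two children have degree $<\delta$, so the polynomial $[v]$ computed at $v$ has degree $<2\delta$. Replace each frontier gate by a fresh variable $z_v$ carrying weight $\deg v\ge\delta$, the original variables keeping weight $1$; the truncated computation then produces a polynomial $\tilde f$ that is homogeneous of weighted degree $d$ and satisfies $f=\tilde f|_{z_v=[v]}$. Weighted homogeneity forces every monomial of $\tilde f$ to involve at most $d/\delta\le\sqrt d$ of the $z_v$ (with multiplicity). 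Hence $f=\sum_M c_M\,M|_{z_v=[v]}$, the sum running over the monomials $M$ of $\tilde f$, and each term is a product of at most $\sqrt d$ factors $[v]$ (each of degree $<2\sqrt d$) times a leftover monomial in the original variables; expanding each $[v]$ into its $\le t^{O(\sqrt d)}$ monomials and chopping the leftover monomial into $O(\sqrt d)$ blocks of degree $\le\sqrt d$ turns every term into a $\Pi^{[O(\sqrt d)]}\Sigma\Pi^{[O(\sqrt d)]}$ gadget, as required.

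\emph{Counting the terms.} What remains is to bound the number of summands, i.e.\ the number of monomials of the truncated object $\tilde f$, and this is where the balanced structure from the first step must be used. The plan is to apply the cut recursively: once its bottom-most product layers have been stripped off, $\tilde f$ is again computed by a balanced object of size $t^{O(\log d)}$, so after $O(\log d)$ rounds every surviving degree has dropped below $\sqrt d$ and the survivors are fed directly into the bottom two layers. If each round multiplies the number of pieces by at most $t^{O(\sqrt d)}$, one obtains in all $t^{O(\sqrt d\log d)}$ products of $O(\sqrt d)$ polynomials of degree $O(\sqrt d)$, which assemble into a depth-four circuit $\Gamma$ of size $t^{O(\sqrt d\log d)}$ whose multiplication gates all have fan-in $O(\sqrt d)$.

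\emph{Main obstacle.} Everything delicate is in this last step. A brute-force unfolding of the balanced formula to depth four blows the exponent up to $\Theta(\log^2 d)$, while a brute-force monomial expansion blows it up to $\Theta(d)$; reaching $O(\sqrt d\log d)$ forces one to exploit simultaneously the $O(\log d)$-deep balanced structure \emph{and} the threshold being placed exactly at $\sqrt d$, which is what balances the top fan-in $O(\sqrt d)$ against the bottom degree $O(\sqrt d)$. The two points I expect to need the most care are: controlling the leftover monomials in the original variables that sit above a frontier, so that they never push the top product fan-in past $O(\sqrt d)$ across the $O(\log d)$ rounds; and checking that the weighted-homogeneity bookkeeping composes cleanly through the recursion. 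The reason a term-by-term count is unavoidable to begin with — the cancellations among the up to $2^n$ monomials that a small circuit can involve — is exactly why passing to a formula (or a balanced circuit) and reasoning combinatorially is the right move.
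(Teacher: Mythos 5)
You should first note that the paper does not actually prove this statement: it imports it from \cite{Koi12} (building on \cite{AgraVinay08}), adding only the remark that the standard homogenization trick converts the ``formal degree'' version of \cite{Koi12} into the stated ``degree of the computed polynomial'' version. So the comparison is really with the proof in that reference, which your sketch follows in outline: homogenize, balance \`a la Valiant--Skyum--Berkowitz--Rackoff, cut at degree $\sqrt d$, and count. That is the right skeleton, and your observations about where the $\log d$ and the two $O(\sqrt d)$ fan-ins come from are accurate.

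The gap is that the decisive step --- bounding the number of summands by $t^{O(\sqrt d\log d)}$ --- is not carried out; you state it as a conditional (``if each round multiplies the number of pieces by at most $t^{O(\sqrt d)}$\dots'') and then list it under ``main obstacle.'' Moreover, the specific plan you offer for it does not work as described. First, after a single degree-$\sqrt d$ cut every surviving piece already has degree $O(\sqrt d)$, so it is unclear what the subsequent $O(\log d)$ ``rounds'' of cutting operate on; the $\log d$ in the exponent does not come from iterating the cut but from the size $t^{O(\log d)}$ of the balanced formula whose top part (above the cut) is expanded. Second, and more seriously, weighted homogeneity bounds only the number of frontier variables $z_v$ per monomial of $\tilde f$; it says nothing about the original variables, whose total degree in a monomial of $\tilde f$ can still be as large as $d$. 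Hence the a priori monomial count of $\tilde f$ is $t^{O(d)}$, not $t^{O(\sqrt d\log d)}$, and one must use the degree-balanced structure of the VSBR normal form (so that parse trees of the top part have only $O(\sqrt d)$ leaves, each of degree close to $\sqrt d$, with low-degree inputs pushed below the cut) --- or, as in \cite{Koi12}, iterate the VSBR splitting identity $[v]=\sum_w [v\!:\!w]\,[w]$ with controlled degree drops --- to get the claimed count. Until that bookkeeping is done, the theorem is not proved.
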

Note that Theorem~3 of~\cite{Koi12} provides this bound for the case
where $d$ is the so-called ``formal degree'' of $C$ rather than 
the degree of the polynomial computed by $C$. 
Theorem~\ref{circuit2depth4} as stated above can then be derived by an 
application of the standard homogenization trick (see e.g. Proposition~5 and
Theorem~5 in~\cite{Koi12}).
It was recently shown~\cite{Tavenas}
 that the size bound for $\Gamma$ can be reduced from
$t^{O(\sqrt{d} \log d)}$ to $t^{O(\sqrt{d})}$ when $d$ is polynomially
bounded in $t$; this improvement preserves
the $O(\sqrt{d})$ bound on the fan-in of multiplication gates.

We can now complete the proof of Theorem~\ref{permanent}. The idea of
this proof is based on the one of Proposition 2 in~\cite{Koi10a}, 
but is simpler since it doesn't use the counting hierarchy.
A similar simplification can be done for the real 
$\tau$-conjecture~\cite{TavenasThese}.

We can expand the exponents $i$ and $i^2$ of~(\ref{hardpoly}) in base
2. 
This leads to the equality
\begin{equation} \label{plugin}
f_n(X,Y)=h_n(X^{2^0},X^{2^1},\ldots,X^{2^{n-1}},Y^{2^0},\ldots,Y^{2^{2n-1}})
\end{equation}
where $h_n(x_0,x_1,\ldots,x_{n-1},y_0,\ldots,y_{2n-1})$
 is the multilinear polynomial
$$\sum_{\alpha\in\{0,1\}^{n},\beta\in\{0,1\}^{2n}} a(n,\alpha,\beta)
 x_0^{\alpha_0}x_1^{\alpha_1} 
\cdots x_{n-1}^{\alpha_{n-1}}y_0^{\beta_0}\cdots
y_{2n-1}^{\beta_{2n-1}}.$$ 
Here the exponents $\alpha_j$, $\beta_j$ denote the coordinates
of the vectors $\alpha$, $\beta$.
The coefficient $a(n,\alpha,\beta)$ equals $1$ if the vector $\beta$ corresponds
to the square of the vector $\alpha$
and $0$ otherwise.
Note that $h_n$ is a polynomial in $3n$ variables.
The coefficients $a(n,\alpha,\beta)$ can be computed in time polynomial in $n$.
By Valiant's criterion~\cite{Burg}, this implies that the polynomial family
$(h_n)$ belongs to the complexity class~$\vnp$.
Since the permanent is $\vnp$-complete and is assumed to have polynomial-size
circuits, $(h_n)$ also has polynomial-size circuits.
By Theorem~\ref{circuit2depth4}, it follows that the polynomials $h_n$
are computable by depth 4 circuits of size $n^{O(\sqrt{n} \log n)}$
with multiplication gates of fan-in $O(\sqrt{n})$.
Using~(\ref{plugin}), we can plug in powers of $X$, $Y$ and powers of 2
into those circuits to express $f_n$ as a sum of products like in~(\ref{sps}).
The resulting parameters $k$ and $t$ are of order $n^{O(\sqrt{n} \log n)}$, and
$m=O(\sqrt{n})$.
As explained at the beginning of this section, this leads to a contradiction
with the assumption in Theorem~\ref{permanent}.~$\Box$

In the remainder of this section we give a refinement of 
Theorem~\ref{permanent}. We show that it suffices to bound the number 
of edges of the Newton polygons of sums of {\em powers} of sparse polynomials
in order to obtain a lower bound for the permanent. 
\begin{theorem} \label{powers}
Fix a universal constant $c<2$, and assume that  we have the upper bound 
$2^{(m+ \log kt)^c}$ on the number of edges
of $\Newt(f)$ for polynomials of the form
\begin{equation} 
f(X,Y)=\sum_{i=1}^k a_i{f_i}(X,Y)^m
\end{equation}
where $a_i \in \kk$ and the $f_i$ have at most $t$ monomials 
(as in Theorem~\ref{permanent},
we require this upper bound to hold only if $kmt$ is sufficiently large).
Then the permanent is not computable by polynomial size arithmetic circuits.
\end{theorem}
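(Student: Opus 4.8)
The plan is to reuse the proof of Theorem~\ref{permanent} almost verbatim, inserting a single new step that turns a product of $m$ sparse polynomials into a short linear combination of $m$-th powers of sparse polynomials. The tool is the polarization identity: for polynomials $g_1,\dots,g_m$ over a field of characteristic zero (all that is really needed is that $m!$ be invertible),
\begin{equation*}
g_1 g_2 \cdots g_m \;=\; \frac{1}{m!}\sum_{S \subseteq \{1,\dots,m\}} (-1)^{m-|S|}\Bigl(\sum_{j\in S} g_j\Bigr)^{m}.
\end{equation*}
This is verified by expanding the right-hand side with the $g_j$ treated as formal indeterminates: a monomial whose support is a proper subset $T\subsetneq\{1,\dots,m\}$ picks up the scalar factor $\sum_{S\supseteq T}(-1)^{m-|S|}=(1-1)^{m-|T|}=0$, while the monomial $g_1\cdots g_m$ (support $T=\{1,\dots,m\}$) picks up $m!/m!=1$.

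First I would rerun the argument of Theorem~\ref{permanent}: assuming the permanent has polynomial-size arithmetic circuits, the family $(h_n)$ lies in $\vnp$ and hence has polynomial-size circuits, so Theorem~\ref{circuit2depth4} yields depth-$4$ circuits for $h_n$ of size $n^{O(\sqrt n\log n)}$ whose multiplication gates have fan-in at most $m=O(\sqrt n)$. Substituting the appropriate powers of $X$, $Y$ and $2$ as in~(\ref{plugin}) then expresses $f_n$ as a sum of $k=n^{O(\sqrt n\log n)}$ products, each product having at most $m$ factors and each factor at most $t=n^{O(\sqrt n\log n)}$ monomials. Padding every product on the right with factors equal to the constant $1$, I may assume that each product has exactly $m$ factors.

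Next I would apply the polarization identity to each of the $k$ products, which rewrites
\begin{equation*}
f_n(X,Y) \;=\; \sum_{i=1}^{k}\ \sum_{S\subseteq\{1,\dots,m\}} \frac{(-1)^{m-|S|}}{m!}\Bigl(\sum_{j\in S} f_{ij}(X,Y)\Bigr)^{m}.
\end{equation*}
This is exactly of the form $\sum_\ell a_\ell\,g_\ell(X,Y)^m$ required by the hypothesis of the theorem: the number of summands is $k'=k\,2^m$, the exponent is still $m$, the scalars $a_\ell=\pm 1/m!$ lie in $\kk$, and each base polynomial $g_\ell=\sum_{j\in S}f_{ij}$ has at most $t'=mt$ monomials. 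Since $m=O(\sqrt n)$ we still have $k'=n^{O(\sqrt n\log n)}$ and $t'=n^{O(\sqrt n\log n)}$, hence $m+\log(k't')=O(\sqrt n\,\log^2 n)$, and $k'mt'$ is certainly large enough for the hypothesis to apply. The assumed bound then forces $\Newt(f_n)$ to have at most $2^{(m+\log k't')^{c}}=2^{o(n)}$ edges, using $c<2$; this contradicts the fact that $\Newt(f_n)$ has exactly $2^n$ edges for $n$ large, which is the desired contradiction.

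I do not expect any serious obstacle here: all the difficulty was already overcome in Theorem~\ref{permanent}, and the only genuinely new ingredient is the elementary polarization identity together with routine tracking of parameters. The one point to keep in mind is that the identity divides by $m!$, so the argument as written needs $\mathrm{char}(\kk)=0$ (or at least $\mathrm{char}(\kk)>m$), which is compatible with the standing assumption of the paper. It is also worth stressing that the reduction only multiplies $k$ by $2^m$ and $t$ by $m$ while leaving the exponent equal to $m=O(\sqrt n)$, so the asymmetric, exponentially sensitive role played by the parameter $m$ as opposed to $k$ and $t$ is faithfully preserved.
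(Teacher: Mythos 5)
Your proof is correct and follows essentially the same route as the paper. The paper also reduces Theorem~\ref{powers} to Theorem~\ref{permanent} by converting each product of $m$ sparse factors into a short linear combination of $m$-th powers of sums of those factors, with the same quantitative footprint ($k' = \Theta(2^m k)$ summands, each base polynomial having at most $mt$ monomials); the only difference is that you invoke the inclusion--exclusion form of the polarization identity, whereas the paper uses Fischer's formula, which achieves the same effect with $2^{m-1}$ (rather than $2^m$) terms --- an immaterial constant-factor difference. The paper also organizes the argument slightly more modularly (showing the hypothesis of Theorem~\ref{powers} implies that of Theorem~\ref{permanent} and then invoking the latter as a black box rather than re-running its proof), but this is purely stylistic.
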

Clearly, we can assume that all the coefficients $a_i$ are equal to 1
(multiply $f_i$ by a $m$-th root of $a_i$ 
if necessary).

Theorem~\ref{powers} is an easy consequence of Theorem~\ref{permanent} 
and Fischer's formula~\cite{fischer94}. 
This formula shows that any monomial $z_1z_2\cdots z_m$
can be expressed as a linear combination of $2^{m-1}$ powers of linear 
forms.
\begin{lemma} \label{fischer}
For any $m$, we have 
$$2^{m-1}m!z_1z_2\cdots z_m=\sum_{r=(r_2,\ldots,r_m) \in \{-1,1\}^{m-1}}
(\prod_{i=2}^m r_i)(z_1+\sum_{i=2}^m r_i z_i)^m.$$
\end{lemma}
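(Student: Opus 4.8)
The plan is to prove the identity by a brute-force expansion of the right-hand side together with a cancellation argument: I would show that in the expanded sum every monomial other than $z_1z_2\cdots z_m$ has coefficient $0$, while $z_1z_2\cdots z_m$ accumulates exactly the factor $2^{m-1}m!$.

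First I would apply the multinomial theorem to each summand, writing
$$\Bigl(z_1+\sum_{i=2}^m r_iz_i\Bigr)^m=\sum_{e_1+\cdots+e_m=m}\binom{m}{e_1,\ldots,e_m}z_1^{e_1}\prod_{i=2}^m r_i^{e_i}z_i^{e_i},$$
the sum being over nonnegative integer tuples $(e_1,\ldots,e_m)$. Multiplying by the sign $\prod_{i=2}^m r_i$ and summing over $r\in\{-1,1\}^{m-1}$, the coefficient of a fixed monomial $z_1^{e_1}\cdots z_m^{e_m}$ in the right-hand side equals $\binom{m}{e_1,\ldots,e_m}\prod_{i=2}^m\bigl(\sum_{r_i\in\{-1,1\}}r_i^{e_i+1}\bigr)$, since a sum over $r$ of a product that separates across the coordinates factors as a product of one-variable sums.

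Next I would invoke the elementary fact that $\sum_{r_i\in\{-1,1\}}r_i^{e_i+1}$ equals $2$ when $e_i$ is odd and $0$ when $e_i$ is even. Hence a monomial survives only if $e_2,\ldots,e_m$ are all odd, and in particular all at least $1$. A short parity count then pins everything down: the sum $e_2+\cdots+e_m$ of $m-1$ odd numbers is congruent to $m-1$ modulo $2$, while it must equal $m-e_1$ and is at least $m-1$; these constraints force $e_1=1$ and then $e_2=\cdots=e_m=1$. So the only surviving monomial is $z_1\cdots z_m$, with coefficient $\binom{m}{1,\ldots,1}\cdot 2^{m-1}=m!\,2^{m-1}$, which matches the left-hand side.

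I do not expect a genuine obstacle here, since the identity is a formal computation; the only step needing a little care is the parity argument ruling out the case $e_1=0$ (a sum of $m-1$ odd numbers cannot equal $m$). An alternative packaging is to apply, to the polynomial $(z_1+u_2+\cdots+u_m)^m$ in auxiliary variables $u_i$, the composition of the finite-difference operators $u_i\mapsto\tfrac12\bigl(\,\cdot\,|_{u_i\to u_i+1}-\,\cdot\,|_{u_i\to u_i-1}\bigr)$ and read off the sign-independent part, but the direct multinomial expansion above is the most transparent route.
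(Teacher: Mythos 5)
Your proof is correct. The paper itself does not prove Lemma~\ref{fischer}; it merely cites Fischer's 1994 paper \cite{fischer94}, so there is no in-paper argument to compare against. Your brute-force route via the multinomial theorem is the standard one and is sound in every step: after multiplying by the sign $\prod_{i=2}^m r_i$ the coefficient of $z_1^{e_1}\cdots z_m^{e_m}$ does factor into $\binom{m}{e_1,\ldots,e_m}\prod_{i=2}^m\bigl(\sum_{r_i\in\{-1,1\}}r_i^{e_i+1}\bigr)$; each one-variable sum is $2$ for $e_i$ odd and $0$ for $e_i$ even, killing all terms with some even $e_i$ among $i\geq 2$; and your parity count correctly rules out $e_1=0$ (a sum of $m-1$ odd integers has the parity of $m-1$, not $m$) and forces $e_1=1$, $e_2=\cdots=e_m=1$, giving coefficient $m!\,2^{m-1}$ as required. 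No gap.
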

Note that the exponential blowup entailed by Fischer's formula is acceptable
because we will apply it with a value of $m$ which is small compared 
to the main complexity parameter $n$, i.e., with $m=O(\sqrt{n})$.
The idea of using Fischer's formula to turn a product into sums of
powers comes from~\cite{GuptaKKS13,Kayal12}. 
\begin{proof}[Proof of Theorem~\ref{powers}]
We show that the assumption in Theorem~\ref{powers} implies that of
Theorem~\ref{permanent}. Consider therefore a polynomial $f$ of the
form~(\ref{sps}).
We rewrite it as a sum of powers by applying Lemma~\ref{fischer} to each 
of the $k$ products in~(\ref{sps}).
This yields an identity of the form
$$f(X,Y)=\sum_{i=1}^{k'} a_i{f_i}(X,Y)^m$$
where $a_i \in \kk$, the $f_i$ have at most $mt$ monomials, 
and $k'=2^{m-1}k$.
We are now in position to apply Theorem~\ref{permanent}: $\Newt(f)$ 
has at most $2^{(m+ \log k't')^c}$ edges. 
For any constant $c'>c$, this is less than $2^{(m+ \log kt)^{c'}}$ if
$kmt$ is sufficiently large. We have therefore derived the hypothesis
of Theorem~\ref{permanent} from that of Theorem~\ref{powers}, and
we can conclude that the permanent is hard for arithmetic circuits.
\end{proof}

\begin{remark}
  For Theorem~\ref{powers}, the fact that $\kk$ has characteristic $0$ is
  important. Indeed,  in positive characteristic Lemma~\ref{fischer} 
  does not allow to rewrite a monomial $z_1z_2 \cdots z_m$ 
as a linear combination of powers of linear forms.
\end{remark}

\begin{remark}
As pointed out in the introduction, we gave in~\cite{Koi10a} similar results
for real roots of univariate polynomials rather than for Newton polygons
of bivariate polynomials. More precisely, let us measure the size of 
a sum of products of sparse polynomials by $s=kmt$.
This definition of ``size''
applies to bivariate polynomials of the form~(\ref{sps}) as well 
as to their univariate analogues. We showed 
 that for any constant $c<2$, a $2^{(\log s)^c}$ upper bound 
on the number of real roots implies that the permanent is hard for arithmetic
circuit (see Conjecture~3 in~\cite{Koi10a}  and the remarks following it). 
In fact, the same proof shows than 
an upper bound of the form $2^{(m+\log kt)^c}$ as in Theorem~\ref{permanent}
is sufficient. This is clearly a better way of stating our result since
it allows for a much worse dependency of the number of real roots with
respect to $m$.
Moreover, as in Theorem~\ref{powers} 
it is sufficient to establish this bound 
for sums of powers.
As in the proof of Theorem~\ref{powers}, 
this follows from a straightforward application of Fischer's formula. 
\end{remark}

\section{Upper Bounds from Convexity Arguments} \label{bounds}

In this section we improve the coarse upper bound $k.t^m$ upper bound on 
the number of edges of Newton polygons of polynomials of the form~(\ref{sps}).
These results apply to fields of arbitrary characteristic.
Our main tool is a result of convex geometry~\cite{EPRS08}.
\begin{theorem} \label{convex}
Let $P$ and $Q$ be two planar point sets with $|P|=s$ and $|Q|=t$.
Let $S$ be a subset of the Minkowski sum $P+Q$. If $S$ is convexly independent (i.e., its elements are the vertices of a convex polygon) we have
$|S|=O(s^{2/3}t^{2/3}+s+t)$.
\end{theorem}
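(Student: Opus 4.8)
Proof proposal for Theorem~\ref{convex} (the Eggleton--Pach--Rubin-Spielman-type bound on convexly independent subsets of a Minkowski sum).

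The plan is to reduce the statement to an incidence bound between points and curves, following the strategy of~\cite{EPRS08}. First I would fix the convexly independent set $S \subseteq P+Q$ and order its elements $v_1,\dots,v_{|S|}$ cyclically around the convex polygon they form. Each $v_\ell$ can be written (not necessarily uniquely) as $p + q$ with $p \in P$, $q \in Q$; the key observation is that the vectors $v_{\ell+1}-v_\ell$ are, up to scaling, the edge directions of a convex polygon, hence they occur in \emph{strictly increasing angular order} as $\ell$ runs through the cycle. This ``sortedness'' of differences is the structural property that makes the bound subquadratic: a generic subset of $P+Q$ of a given size would only give $|S| = O(st)$.

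The second step is to encode pairs via a geometric transform. For a chosen writing $v_\ell = p_\ell + q_\ell$, associate to $v_\ell$ the point $q_\ell \in Q$ together with the ``slope constraint'' coming from the two adjacent edges at $v_\ell$: the angular interval between $\mathrm{dir}(v_\ell - v_{\ell-1})$ and $\mathrm{dir}(v_{\ell+1}-v_\ell)$ must be consistent with $p_\ell$ being a vertex (or boundary point) of the relevant translate of the polygon. This turns the problem into counting incidences between the $t$ points of $Q$ (or the $s$ points of $P$) and a family of algebraic curves of bounded degree indexed by the edges of $\mathrm{conv}(S)$. One then invokes the Szemer\'edi--Trotter incidence bound (or the Pach--Sharir generalization for curves with bounded degrees of freedom): $I(m \text{ points}, n \text{ curves}) = O(m^{2/3}n^{2/3} + m + n)$. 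Plugging in $m = t$, $n = O(s)$ (or vice versa, and balancing) yields the $O(s^{2/3}t^{2/3} + s + t)$ bound after summing the contributions.

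The main obstacle, and the step I expect to be most delicate, is making the curve family genuinely have bounded ``multiplicity'' / bounded degrees of freedom so that an incidence theorem applies cleanly: a priori a single point of $Q$ could be the $Q$-part of many vertices $v_\ell$, and a single point of $P$ likewise, so one must use the monotone angular ordering to argue that each point participates in only $O(1)$ ``curves'' in an amortized sense, or partition $S$ into $O(1)$ (or $O(\log)$) monotone sub-chains and handle each separately, absorbing the logarithmic factor if necessary and then removing it with a more careful charging argument. A secondary technical point is the non-uniqueness of the representation $v_\ell = p_\ell + q_\ell$; I would handle this by fixing, for each $v_\ell$, a canonical choice (say lexicographically smallest $q_\ell$), and checking that the angular-monotonicity argument still goes through for the canonical choices. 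Once these two points are settled, the incidence bound is a black box and the theorem follows; the additive $s + t$ term is exactly the additive term of Szemer\'edi--Trotter.
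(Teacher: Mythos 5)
The paper does not prove Theorem~\ref{convex}: it is imported from Eisenbrand, Pach, Rothvo{\ss} and Sopher~\cite{EPRS08} and used as a black box (``Our main tool is a result of convex geometry~\cite{EPRS08}''). There is therefore no in-paper proof to compare your proposal against; citing~\cite{EPRS08} is the intended answer here, and your sketch does correctly identify both the source and the top-level tool, namely an incidence bound of Szemer\'edi--Trotter type.

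Read as a proof attempt rather than a citation, the sketch has a real gap and one misleading intuition. The misleading intuition: you suggest arguing that ``each point participates in only $O(1)$ curves in an amortized sense.'' If that were the mechanism, the total incidence count would be $O(t)$ and you would deduce a bound linear in $t$, not $O(s^{2/3}t^{2/3})$. Szemer\'edi--Trotter is invoked precisely because points may lie on many lines; the theorem's strength is that it still controls the total without any per-point bound, and the $s^{2/3}t^{2/3}$ term lives in the regime where points and lines both have high degree. The gap: the entire content of the argument is \emph{what} the lines are and \emph{why} counting their incidences with $P$ or $Q$ gives information about $|S|$, and you defer exactly that step. In~\cite{EPRS08} there are no algebraic curves of higher degree and no ``degrees of freedom'' bookkeeping to settle: after splitting $\conv(S)$ into four monotone arcs by quadrant of the outward normal, one associates to each vertex $v_i = p_i + q_i$ of a fixed arc an ordinary \emph{line}, a supporting line of $\conv(S)$ at $v_i$ translated by $-p_i$ so as to pass through $q_i$; the strict monotonicity of the supporting slopes along the arc is what makes these lines pairwise distinct, and the remaining counting argument plugs straight into Szemer\'edi--Trotter. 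Your concern about non-unique representations $v_\ell = p_\ell + q_\ell$ is also a non-issue: one fixes an arbitrary representation once and for all and the argument goes through unchanged. So the plan points in the right direction, but what you have postponed as ``the most delicate step'' is in fact essentially the whole proof, and the heuristic you offer for filling it would give the wrong bound.
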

It is known that this upper bound is optimal up to constant 
factors~\cite{BBFKOTT10} 
(a non-optimal lower bound was also given in~\cite{SV10}).
A linear upper bound is known~\cite{HOR07} for the case where $P$ and $Q$ are convexly
independent.

We first consider sums of products of two polynomials.
\begin{theorem} \label{2polys}
Consider   a bivariate polynomial
 $f  \in \kk[X,Y]$ 
 of the form
\begin{equation}
f(X,Y)=\sum_{i=1}^k f_i g_i(X,Y)
\end{equation}
where the $f_i$ have at most $r$ monomials and the 
 $g_i$ have at most $s$ monomials.
The Newton polygon  of  $f$ has $O(k(r^{2/3}s^{2/3}+r+s))$ edges.
\end{theorem}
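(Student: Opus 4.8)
The plan is to reduce the statement directly to the convex-geometry bound of Theorem~\ref{convex}, applied termwise and then combined via a convex-hull argument. First I would recall that $\Newt(f_ig_i)=\Newt(f_i)+\Newt(g_i)$, and since $\Mon(f_ig_i)$ is contained in the Minkowski sum $\Mon(f_i)+\Mon(g_i)$ (a set of size at most $rs$), the set of \emph{vertices} of $\Newt(f_ig_i)$ is a convexly independent subset of $\Mon(f_i)+\Mon(g_i)$. Theorem~\ref{convex} with $P=\Mon(f_i)$ (so $|P|\le r$) and $Q=\Mon(g_i)$ (so $|Q|\le s$) then gives that $\Newt(f_ig_i)$ has $O(r^{2/3}s^{2/3}+r+s)$ edges, hence at most that many vertices. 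This handles a single product.

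Next I would pass from the individual products to the sum. The key observation — already highlighted in the paper's discussion of cancellations — is that every vertex of $\Newt(f)$ lies in $\Mon(f)\subseteq\bigcup_{i=1}^k \Mon(f_ig_i)$, and a vertex of a convex hull of a union of sets must be a vertex of the convex hull of the union of the vertex sets. Concretely, let $V_i$ denote the vertex set of $\Newt(f_ig_i)$; then $\Mon(f)\subseteq \bigcup_i \conv(V_i)$, so $\conv(\Mon(f))\subseteq \conv\big(\bigcup_i V_i\big)$, and therefore every vertex of $\Newt(f)$ belongs to $\bigcup_i V_i$. Consequently the number of vertices of $\Newt(f)$ is at most $\sum_{i=1}^k |V_i| = O(k(r^{2/3}s^{2/3}+r+s))$. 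Since a convex polygon has the same number of edges as vertices, this is the desired bound.

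The one point that needs a little care — and which I expect to be the main (minor) obstacle — is the claim that a vertex of $\conv(\bigcup_i \conv(V_i))$ must be one of the original points in $\bigcup_i V_i$ rather than some new point created by convex combination across different $i$. This follows because $\conv(\bigcup_i \conv(V_i)) = \conv(\bigcup_i V_i)$ (the convex hull of a union of convex hulls equals the convex hull of the union), and an extreme point of the convex hull of a finite point set is always one of those points. I would spell this out in one sentence. Everything else is bookkeeping: tracking that ``$\le r$ monomials'' and ``$\le s$ monomials'' feed into Theorem~\ref{convex} with the right parameters, and noting that the bound is stated up to constants so the monotonicity in $r,s$ causes no trouble. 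No cancellation-type difficulty arises here precisely because we only ever pass to supersets and convex hulls, never subtract.
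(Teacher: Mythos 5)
Your proof has a genuine gap at the step where you pass from the individual products to the sum.

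You define $V_i$ to be the vertex set of $\Newt(f_ig_i)$, establish $\Newt(f)\subseteq\conv\bigl(\bigcup_i V_i\bigr)$, and then assert that ``therefore every vertex of $\Newt(f)$ belongs to $\bigcup_i V_i$.'' This inference is invalid: if $A\subseteq B$ are convex polygons, the vertices of $A$ need not be vertices of $B$, nor even lie on the boundary of $B$ --- they can be interior points of $B$. And this is not a hypothetical worry here; it is exactly the cancellation phenomenon the paper highlights. In the paper's second example ($f=1+X^2Y+XY^2$, $g=1+X^4Y+XY^4$, $k=2$ with the second summand being $-1\cdot 1$), the monomials $X^2Y$ and $XY^2$ are vertices of $\Newt(fg-1)$ but are interior to $\Newt(fg)$, hence not in your $V_1$; they only become extremal after the constant term cancels. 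So your claimed containment of the vertices of $\Newt(f)$ in $\bigcup_i V_i$ is false, and the count does not go through.

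The fix --- and this is what the paper actually does --- is to apply Theorem~\ref{convex} not to the vertices of $\Newt(f_ig_i)$ but to the set $S_i$ of \emph{surviving} monomials, i.e.\ those monomials of $f_ig_i$ that appear in $f$ with a nonzero coefficient. Then $\Newt(f)=\conv\bigl(\bigcup_i S_i\bigr)=\conv\bigl(\bigcup_i\conv(S_i)\bigr)$ holds exactly (not merely as an inclusion), so every vertex of $\Newt(f)$ is a vertex of some $\conv(S_i)$. The vertex set of $\conv(S_i)$ is a convexly independent subset of the Minkowski sum $\Mon(f_i)+\Mon(g_i)$, so Theorem~\ref{convex} bounds it by $O(r^{2/3}s^{2/3}+r+s)$, and multiplying by $k$ finishes. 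Your first paragraph (bounding a single product) is correct but is applied to the wrong set; the rest of your reasoning --- that the hull of a union of hulls equals the hull of the union, and that extreme points of a finite set's hull lie in the set --- is sound and is indeed what carries the corrected argument.
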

\begin{proof}
Let $S_i$ be the set of points in the plane corresponding to the monomials of $f_i g_i$ which appear in $f$ with a nonzero 
coefficient. Since $\Newt(f)$ is the convex hull
of $\bigcup_{i=1}^k \conv(S_i)$, it is enough to bound the number of 
vertices of $\conv(S_i)$.
Those vertices form a convexly independent subset 
of the Minkowski sum $\Mon(f_i)+\Mon(g_i)$. By Theorem~\ref{convex}, it follows
that $\conv(S_i)$ has $O(r^{2/3}s^{2/3}+r+s)$ vertices.
Multiplying this estimate by $k$ yields an upper bound on the number of
vertices and edges of $\Newt(f)$.
\end{proof}
From this result it is easy to derive an upper bound for the general
case, where we have products of $m \geq 2$ polynomials. 
We just divide the $m$ factors into two groups of approximately $m/2$ factors,
and in each group we expand the product by brute force.
\begin{theorem} \label{mpolys}
Consider   any bivariate polynomial
 $f  \in \kk[X,Y]$ 
 of the form
\begin{equation} 
f(X,Y)=\sum_{i=1}^k \prod_{j=1}^m f_{ij}(X,Y)
\end{equation}
where $m \geq 2$ and the $f_{ij}$ have at most $t$ monomials.
The Newton polygon  of  $f$ has $O(k.t^{2m/3})$ edges.
\end{theorem}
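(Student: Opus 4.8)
The plan is to reduce the $m$-fold product case to the two-factor case already handled in Theorem~\ref{2polys}. First I would split the index set $\{1,\dots,m\}$ into two blocks, say $J_1 = \{1,\dots,\lceil m/2\rceil\}$ and $J_2 = \{\lceil m/2\rceil+1,\dots,m\}$, and for each $i$ define $F_i(X,Y) = \prod_{j \in J_1} f_{ij}(X,Y)$ and $G_i(X,Y) = \prod_{j \in J_2} f_{ij}(X,Y)$. Then $f = \sum_{i=1}^k F_i G_i$ has exactly the form required by Theorem~\ref{2polys}.

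Next I would bound the number of monomials of $F_i$ and $G_i$. Expanding a product of at most $\lceil m/2 \rceil$ polynomials, each with at most $t$ monomials, by brute force gives at most $t^{\lceil m/2\rceil}$ monomials for $F_i$, and similarly at most $t^{\lfloor m/2\rfloor} \le t^{\lceil m/2\rceil}$ for $G_i$. So in the notation of Theorem~\ref{2polys} we may take $r = s = t^{\lceil m/2 \rceil}$. Applying that theorem, the Newton polygon of $f$ has
\[
O\!\left(k\left(r^{2/3}s^{2/3} + r + s\right)\right) = O\!\left(k\left(t^{4\lceil m/2\rceil/3} + t^{\lceil m/2\rceil}\right)\right)
\]
edges. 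Since $\lceil m/2 \rceil \le (m+1)/2$ and $m \ge 2$, one has $4\lceil m/2\rceil/3 \le 2(m+1)/3 \le 2m/3 + 2/3$, and the dominant term $t^{4\lceil m/2\rceil/3}$ is $O(t^{2m/3})$ up to a constant-factor loss in $t$ absorbed into the $O(\cdot)$; the lower-order term $k t^{\lceil m/2\rceil}$ is likewise $O(k t^{2m/3})$ because $\lceil m/2\rceil \le 2m/3$ for $m \ge 2$ (with equality essentially at $m=3$, and strictly below otherwise, while for $m=2$ both blocks are single factors so $r=s=t$ and the bound is $O(kt^{4/3})$, matching $O(kt^{2m/3})$). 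Thus $\Newt(f)$ has $O(k \cdot t^{2m/3})$ edges.

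Honestly, there is no real obstacle here — the only point requiring a modicum of care is the bookkeeping in the last paragraph, namely checking that rounding $m/2$ up does not push the exponent above $2m/3$ and, where it would (the tiny-$m$ boundary cases), confirming that the brute-force monomial count on a single block is already consistent with the claimed bound. A cleaner alternative, if one prefers to avoid case analysis, is simply to note that $t^{\lceil m/2\rceil} \le t^{(m+1)/2}$ and absorb the resulting $t^{2/3}$-type slack into the asymptotic constant, since $m$ is treated as fixed relative to the implied constant; but as written the paper's convention makes the $O(k t^{2m/3})$ form the natural target. I would state the proof in the two-sentence form: split into two groups of $\approx m/2$ factors, expand each by brute force to get $\le t^{\lceil m/2\rceil}$ monomials per group, and invoke Theorem~\ref{2polys}.
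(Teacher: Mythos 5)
Your plan is the same as the paper's: split the $m$ factors into two blocks of roughly $m/2$, expand each block by brute force, and apply Theorem~\ref{2polys}. But the final calculation has a genuine error that the paper avoids, precisely by \emph{not} symmetrizing $r$ and $s$. You correctly observe that $F_i$ has at most $t^{\lceil m/2\rceil}$ monomials and $G_i$ has at most $t^{\lfloor m/2\rfloor}$, and then you throw away the smaller exponent by taking $r=s=t^{\lceil m/2\rceil}$. This gives a dominant term $k\,t^{4\lceil m/2\rceil/3}$, which for odd $m$ equals $k\,t^{2m/3+2/3}$ and is \emph{not} $O(k\,t^{2m/3})$: the surplus $t^{2/3}$ is not a constant and cannot be absorbed into the $O(\cdot)$, regardless of whether $m$ is regarded as fixed. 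Both of the phrases you use to dismiss the gap --- ``a constant-factor loss in $t$'' and ``absorb the resulting $t^{2/3}$-type slack into the asymptotic constant'' --- are incorrect, since $t$ is a free parameter of the statement.

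The fix is simple and is what the paper does. Theorem~\ref{2polys} permits $r\neq s$, so just take $r=t^{\lfloor m/2\rfloor}$ and $s=t^{\lceil m/2\rceil}$. Then
\[
r^{2/3}s^{2/3}=t^{\frac{2}{3}(\lfloor m/2\rfloor+\lceil m/2\rceil)}=t^{2m/3},
\]
with no rounding loss at all, and the lower-order term $r+s\le 2\,t^{\lceil m/2\rceil}$ is $O(t^{2m/3})$ by the inequality $\lceil m/2\rceil\le 2m/3$ for $m\ge 2$ --- the part of your bookkeeping that was actually correct. In short: the asymmetry $r\neq s$ is exactly what makes the exponent come out to $2m/3$ on the nose, and collapsing to the larger of the two is what breaks your version.
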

\begin{proof}
As suggested above, we write each of the $k$ products 
as a product of two polynomials $F_i=\prod_{i=1}^{\lfloor m/2 \rfloor} f_i$ and
$G_i=\prod_{i=1}^{\lceil m/2 \rceil} f_i$. 
We can now apply Theorem~\ref{2polys} to $f=\sum_{i=1}^k F_i G_i$,
with $r=t^{\lfloor m/2 \rfloor}$ and $s=t^{\lceil m/2 \rceil}$.
In the resulting $O(k(r^{2/3}s^{2/3}+r+s))$ upper bound the term 
$kr^{2/3}s^{2/3}$ dominates since $r^{2/3}s^{2/3}=
t^{2(\lfloor m/2 \rfloor+\lceil m/2 \rceil)/3}=t^{2m/3}$ and $m \geq 2$.
\end{proof}
In order to avoid the brute force expansion in the proof of this theorem
it is natural to consider for each $i$ a convexly independent subset $S_i$ 
of the Minkowski sum of the $m$ sets $\Mon(f_{i1}),\ldots,\Mon(f_{im})$.
This is exactly the open problem at the end of~\cite{BBFKOTT10}:
determine the maximal cardinality $M_m(t)$ of a convexly independent subset
of the Minkowski sum of $m$ sets $P_0,\ldots,P_{m-1}$ 
of $t$ points in the Euclidean plane.
For instance, the lower bound of~\cite{BBFKOTT10} 
combined with the upper bound of~\cite{EPRS08} shows that 
$M_2(t)=\Theta(t^{4/3})$.
Unfortunately, we shall see that $M_m(t)=t^{\Omega(m)}$, so that brute
force expansion is not very far from the optimum.

\begin{example} \label{base} Fix an integer $b \geq 2$. Let $P_k$ be
  the $b^2 \times b$ grid made of the integer points $(x,y)$ such that:
\begin{itemize}
\item[-] all the digits in base $b^2$ of $x$ are equal to zero, except possibly
the 
digit of weight $b^{2k}$;

\item[-]  all the digits in base $b$ of $y$ are equal to zero, except possibly the 
digit of weight $b^k$.
\end{itemize}
More explicitly,
  \begin{align*}  
    P_k = \{(b^{2k}.i,b^k.j);\ 0 \leq i \leq b^2 -1 \textrm{ and } 0
    \leq j \leq b-1\}.
  \end{align*}
  Clearly, the Minkowksi sum $P_0+\ldots+P_{m-1}$ is the grid
  $\{0,\ldots,b^{2m}-1\} \times \{0,\ldots, b^m-1\}$ of size 
  $b^{2m}\times b^m$.
\end{example}

The next lemma (which is certainly not optimal) shows how to find a 
fairly large set of convexly independent points in a grid.
\begin{lemma} \label{grid}
If $n(n-1)/2 < M$ and $n<N$ it is possible to find $n$ convexly independent 
points in the grid $M \times N$.
\end{lemma}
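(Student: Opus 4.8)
The plan is to exhibit an explicit convexly independent configuration of $n$ points lying on (or just below) the graph of a strictly convex function, and then verify that under the stated hypotheses all these points fit inside the $M \times N$ grid. A natural candidate is a discretized parabola: take the points $(i, \binom{i}{2})$ for $i = 0, 1, \ldots, n-1$, i.e. $P_i = (i, \, i(i-1)/2)$. These lie on the parabola $y = x(x-1)/2$, which is strictly convex, so no point $P_i$ lies in the convex hull of the others; hence the $n$ points are convexly independent (they are the vertices of a convex polygon). I would note that the second coordinates are integers precisely because $i(i-1)/2$ is always an integer, so these are genuine grid points.

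The remaining step is to check the points fit in the grid $\{0,\ldots,M-1\} \times \{0,\ldots,N-1\}$ (or whatever indexing convention the paper uses for an ``$M \times N$ grid''). The first coordinate ranges over $0, 1, \ldots, n-1$, and since $n < N$ we have in particular $n - 1 < N$, so the $x$-coordinates are within range. Wait---one should be slightly careful about which side the strict inequality protects; the hypothesis $n < N$ gives $n-1 \le N-1$, which is what is needed for an $N$-wide grid indexed $0$ to $N-1$. The second coordinate is maximized at $i = n-1$, giving $(n-1)(n-2)/2 < n(n-1)/2 < M$, so the $y$-coordinates are also within range. This completes the verification.

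I expect the main (minor) obstacle to be purely bookkeeping: pinning down the exact meaning of ``$M \times N$ grid'' in the paper's convention (whether it has $M$ or $M+1$ lattice points in the first direction, and whether the relevant constraint is $n(n-1)/2 < M$ versus $\le M$), and making sure the chosen parabola's parameters line up so that both the strict inequalities in the hypothesis are exactly what is consumed. Since the statement is explicitly flagged as ``certainly not optimal,'' there is ample slack, and a slightly different strictly convex sequence (e.g. $(i, \lceil i^2/c \rceil)$ for suitable $c$) could be substituted if the constant in front of the quadratic needs adjusting; the parabola $y = x(x-1)/2$ is the cleanest choice because it automatically gives integer coordinates and its extreme value is exactly $\binom{n-1}{2} < \binom{n}{2}$, matching the hypothesis $n(n-1)/2 < M$ on the nose.
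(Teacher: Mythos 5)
Your proposal is correct and follows essentially the same idea as the paper: place $n$ lattice points on a convex curve whose one coordinate grows linearly and whose other coordinate grows as a triangular number, then check the extremes fit the grid bounds. (The paper phrases it slightly differently---as a chain of $n-1$ segments of horizontal length $i$ and slope $1/i$, giving endpoints $(i(i+1)/2,\, i)$---but this is the same construction up to an index shift.) The one thing to tighten is the coordinate orientation: in the paper's convention, as used in Example~\ref{base} and Proposition~\ref{lb}, the grid $M\times N$ is $\{0,\dots,M-1\}\times\{0,\dots,N-1\}$, so the hypothesis $n(n-1)/2<M$ bounds the \emph{first} coordinate and $n<N$ the \emph{second}; you should therefore take $P_i=(i(i-1)/2,\,i)$ rather than $(i,\,i(i-1)/2)$. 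You already flagged this as a bookkeeping ambiguity, and the transposition does not affect convex independence, so the argument goes through.
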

\begin{proof}
We start from the origin and build a sequence of $n-1$ line segments.
The $i$-th segment has horizontal length $i$ and slope $1/i$.
We can keep going as long as we do not go out of the grid, i.e., as long
as $n(n-1)/2 < M$ and $n<N$. Altogether, the $n-1$ segments have $n$ endpoints and 
they are convexly independent.
\end{proof}

\begin{proposition} \label{lb} For all $m$ and infinitely many values
  of $t$ we have:
  $$M_m(t) \geq t^{m/3}-1.$$
\end{proposition}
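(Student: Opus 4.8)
The plan is to combine the grid construction of Example~\ref{base} with the convex-position lemma (Lemma~\ref{grid}). Recall that for a fixed base $b \geq 2$, the sets $P_0,\ldots,P_{m-1}$ each consist of $b^2 \cdot b = b^3$ points, and their Minkowski sum is the full axis-aligned grid $\{0,\ldots,b^{2m}-1\} \times \{0,\ldots,b^m-1\}$. So to lower bound $M_m(t)$ for the parameter value $t = b^3$, it suffices to produce a large convexly independent subset of this $b^{2m} \times b^m$ grid. Applying Lemma~\ref{grid} with $M = b^{2m}$ and $N = b^m$, I can take $n$ as large as allowed by the constraints $n(n-1)/2 < b^{2m}$ and $n < b^m$. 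The second constraint is the binding one: $n(n-1)/2 < b^{2m}$ is satisfied whenever $n \leq b^m$ (since then $n(n-1)/2 < n^2/2 \leq b^{2m}/2 < b^{2m}$), so the real restriction is just $n < b^m$, i.e. we may take $n = b^m - 1$ convexly independent points, or slightly fewer to be safe, and hence $M_m(b^3) \geq b^m - 1$.

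Now I translate this back into a bound in terms of $t$. Setting $t = b^3$ gives $b = t^{1/3}$, so $b^m = t^{m/3}$, and therefore $M_m(t) \geq t^{m/3} - 1$ for every $t$ of the form $b^3$ with $b \geq 2$ an integer. Since there are infinitely many such $t$, this is exactly the claimed statement: for all $m$ and infinitely many values of $t$, $M_m(t) \geq t^{m/3} - 1$.

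I expect no serious obstacle here — the proposition is essentially a bookkeeping assembly of the two preceding results. The one point requiring a little care is checking that Lemma~\ref{grid} really does apply with the intended value of $n$, i.e. verifying that $n(n-1)/2 < M$ is not the binding constraint; as noted above this is immediate because $M = N^2$ in our situation. A secondary cosmetic point is that $b^3$ need not be an integer power pattern matching "infinitely many $t$" in any restrictive sense — but $\{b^3 : b \in \mathbb{Z}_{\geq 2}\}$ is already an infinite set, which is all that is asserted. One could also remark that the bound is presumably far from tight (Lemma~\ref{grid} is stated to be non-optimal), but sharpening it is not needed for the stated proposition and I would leave that to the open problems at the end of the paper.
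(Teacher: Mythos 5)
Your proof is correct and follows exactly the same route as the paper: combine Example~\ref{base} with Lemma~\ref{grid} taking $M=b^{2m}$, $N=b^m$, observe the first constraint is slack, take $n=b^m-1$, and set $t=b^3$. The only difference is that you spell out explicitly why $n(n-1)/2<M$ is not binding, which the paper leaves implicit.
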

\begin{proof}
  From Example~\ref{base} and Lemma~\ref{grid} (choosing $M=b^{2m}$
  and $N=b^m$) 
  we have $M_m(b^3) \geq n$ if $n(n-1)<2b^{2m}$ and $n<b^m$. Hence $M_m(b^3) \geq b^m-1$.
  The result follows by setting $t=b^3$.
\end{proof}
This result shows that other ingredients than Theorem~\ref{convex} will be needed to
answer Conjecture~\ref{tauconj} positively.
A similar argument can be made for the case 
where the sets $P_0,P_1,\ldots,P_{m-1}$ in the Minkowski sum are all equal
(this is a natural case to look at in light of Theorem~\ref{powers}, which
shows that it suffices to deal with sums of powers in order to obtain
a lower bound for the permanent).
More precisely, let $M'_m(t)$ be the maximal cardinality of a convexly
independent subset of an $m$-fold Minkoski sum $P+P+\cdots+P$ where
$P$ is a set of at most $m$ points. By definition we have 
$M'_m(t) \leq M_m(t)$. In the other direction we have 
$M'_m(t) \geq M_m(\lfloor t/m \rfloor)$:
just replace the $m$ sets of size
$\lfloor t/m \rfloor$ by their union.
Hence we have $M'_m(t) \geq \lfloor t/m\rfloor^{m/3} -
1$.

\section{Final Remarks}

In this paper we have proposed a conjecture on the number of edges of the 
Newton polygon of a sum of products of sparse polynomials; 
and we have shown in Theorem~\ref{permanent} that even a weak version 
of this conjecture implies a lower bound for the permanent.
We conclude with a couple of additional open problems.
\begin{enumerate}
\item Consider two polynomials $f, g \in \kk[X,Y]$ with at most $t$
  monomials each. What is the maximum number of edges on the Newton
  polygon of $fg+1$? The difficult case appears when the constant term
  of $fg$ equals $-1$. Theorem~\ref{2polys} provides a $O(t^{4/3})$
  upper bound, but as far as we know the ``true'' bound could be
  linear in $t$. In the appendix we prove a linear upper bound under
  the assumption that $f$ and $g$ have the same supports (i.e.,
  $\Mon(f)=\Mon(g)$)
and that the square of any nonconstant
monomial appearing in $f$ and $g$ does not appear in $fg$.

\item More generally, what is the 
maximum number of edges on the Newton polygon of $f_1\ldots f_m+1$,
where the $f_i$ again have at most $t$ monomials?
Theorem~\ref{mpolys} provides a $O(t^{2m/3})$ upper bound, 
but the true bound could be of the form $2^{O(m)} t^{O(1)}$; it could 
even be polynomial in $m$ and $t$, as implied by Conjecture~\ref{tauconj}.
\end{enumerate}

{\small

\section*{Acknowledgments}

Proposition~\ref{lb} arose  from a discussion with Mark Braverman and
an improvement was suggested by an anonymous referee.
We thank the three referees for suggesting several improvements in
the presentation of the paper.



\appendix
\newpage
\section*{Appendix: the Newton polygon of $fg+1$}

In this section we denote by $0$ the point in the plane with
coordinates $(0,0)$.

We give here (in Theorem~\ref{sameth}) a linear upper bound assuming the
following two properties:
\begin{itemize}
\item[(i)]
The polynomials $f$ and $g$ have the same support, i.e., $\Mon(f)=\Mon(g)$.
We denote by $\{p_0,\ldots,p_{t-1}\}$ this common support.

\item[(ii)] If $f$ and $g$ have a constant term we assume without loss 
of generality that $p_0=0$ and we add the following requirement:
if $p_j$ is an extremal point of $\conv(p_1,p_2,\ldots,p_{t-1})$ 
then $2p_j$ is not in the support of $f$ and $g$.
\end{itemize}
We do not know how to prove a linear upper bound assuming only (i).
Condition~(ii) is satisfied in particular when the points in $\Mon(f)=\Mon(g)$
are convexly independent.

The interesting case, which we consider first,  is when $f$ and $g$ have a constant term 
but $fg+1$ has no constant term.
As explained above we assume that $p_0$ corresponds to the constant
terms of $f$ and $g$, i.e., $p_0=0$.
Under these hypotheses we have the following result.
\begin{proposition} \label{sameprop}
Under assumptions (i) and (ii), $$\Newt(fg+1)=\conv(2p_1,\ldots,2p_{t-1},(p_i)_{i \in I})$$
where $(p_i)_{i \in I}$ is the subset of those monomials in $\Mon(f)$
which appear in $fg+1$ with a nonzero coefficient.
\end{proposition}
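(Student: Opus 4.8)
The plan is to establish the identity by proving the two inclusions separately. Throughout I write $f=\sum_{j=0}^{t-1}a_jX^{p_j}$ and $g=\sum_{j=0}^{t-1}b_jX^{p_j}$ with all $a_j,b_j$ nonzero and $p_0=0$, and I set $Q=\conv(p_1,\ldots,p_{t-1})$. As a guiding picture, recall that $\Newt(fg)=\Newt(f)+\Newt(g)=2\Newt(f)=\conv(0,2p_1,\ldots,2p_{t-1})$; the content of the proposition is that when the constant term of $fg$ is killed by adding $1$, the vertex at the origin is cut off and the monomials of $\Mon(f)$ surviving in $fg+1$ (those indexed by $I$) take its place, the other doubled vertices being untouched.

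First I would prove $\Newt(fg+1)\subseteq\conv(2p_1,\ldots,2p_{t-1},(p_i)_{i\in I})$. Expanding the product gives $\Mon(fg)\subseteq\{p_j+p_l:0\le j,l\le t-1\}$, and since adding $1$ affects only the coefficient of $X^0=X^{p_0+p_0}$, the same containment holds for $\Mon(fg+1)$. Now take a monomial $p_j+p_l$ occurring in $fg+1$. If $j,l\ge1$, then $p_j+p_l=\tfrac{1}{2}(2p_j)+\tfrac{1}{2}(2p_l)$ is the midpoint of $2p_j$ and $2p_l$, hence lies in the target polygon. If one of the indices is $0$, say $j=0$, then $p_j+p_l=p_l\in\Mon(f)$ is a monomial of $f$ occurring in $fg+1$, so $l\in I$ by the definition of $I$. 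Therefore $\Mon(fg+1)$, and hence its convex hull $\Newt(fg+1)$, is contained in the right-hand side.

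Next I would prove the reverse inclusion by checking that each listed generator lies in $\Newt(fg+1)$. The points $p_i$ with $i\in I$ lie in $\Mon(fg+1)$ by definition. For a doubled point $2p_l$ with $p_l$ not a vertex of $Q$, writing $p_l$ as a convex combination of the vertices of $Q$ and multiplying by $2$ shows $2p_l\in\conv\{2p_j:p_j\text{ a vertex of }Q\}$, so it suffices to treat $2p_j$ for $p_j$ a vertex of $Q$. For such $p_j$ the key step is to show that $2p_j=p_k+p_l$ forces $k=l=j$: the case $k=0$ would give $p_l=2p_j\in\Mon(f)$, contradicting hypothesis (ii); and the case $k,l\ge1$ would exhibit $p_j$ as the midpoint of two points of $Q$, which is impossible at a vertex unless $p_k=p_l=p_j$. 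Consequently the coefficient of $X^{2p_j}$ in $fg$ equals $a_jb_j\ne0$, and since $2p_j\ne0$ (as $p_j$ is one of $p_1,\ldots,p_{t-1}$, all distinct from $p_0=0$) this coefficient is unchanged in $fg+1$, so $2p_j\in\Mon(fg+1)\subseteq\Newt(fg+1)$. Combining the two inclusions proves the proposition.

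I expect the reverse inclusion to be the delicate point. A priori, cancellations among the terms $a_kb_lX^{p_k+p_l}$ could destroy a putative vertex $2p_j$, and ruling this out at the doubled vertices of $Q$---the only doubled points that matter---is exactly what hypothesis (ii), together with the extremality of $p_j$ in $Q$, is designed to achieve. It is also worth making explicit that adding $1$ alters only the constant coefficient, so that it can neither create nor destroy any monomial $2p_j$ with $j\ge1$.
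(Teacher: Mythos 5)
Your argument is correct and follows essentially the same route as the paper's proof: the left-to-right inclusion uses the midpoint observation for $p_j+p_l$ with $j,l\ge 1$, and the right-to-left inclusion reduces to the doubled extremal points of $\conv(p_1,\ldots,p_{t-1})$, where the unique-representation argument (ruling out $p_k+p_l=2p_j$ via extremality and hypothesis (ii)) shows the coefficient of $X^{2p_j}$ in $fg+1$ is $a_jb_j\ne 0$. The only cosmetic difference is that you phrase it in terms of the vertices of $Q$ while the paper selects a convexly independent index set $J$; these are the same thing.
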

\begin{proof}
We first prove the inclusion from left to right. Since $fg+1$ has no constant term, all monomials 
of $fg+1$ are of the form $p_i+p_j$ where $i \geq 1$ or $j \geq 1$.
Consider first the case where $i$ and $j$ are both nonzero.
If $i=j$ this monomial appears in the right-hand side, and
if $i \neq j$ it is the middle point of two points (namely, $2p_i$ and $2p_j$) 
appearing in the right-hand side.
The remaining case is when $i=0$ or $j=0$.
If e.g. $j=0$ we have $p_i+p_j=p_i$ and we see from the definition of $I$ 
that this monomial also appears
in the right-hand side.

Now we prove the inclusion from right to left.
Again by definition of $I$, all the $p_i$ with $i \in I$ are monomials
of $fg+1$. Hence it remains to show that 
$$\conv(2p_1,\ldots,2p_{t-1}) \subseteq \Newt(fg+1).$$
The left-hand side can be written as $\conv((2p_j)_{j \in J})$ where
the $(p_j)_{j\in J}$ form a convexly independent subset of 
$\{p_1,\ldots,p_{t-1}\}$.
Any monomial of the form $2p_j$ with $j \in J$ appears in $fg+1$ 
with a nonzero coefficient because it can be obtained in a unique way
by expansion of the product $fg$.
Assume indeed that $2p_j=p_i+p_k$ with $i \neq k$.
Then $p_j$ is the middle point of $p_i$ and $p_k$.
If $i \geq 1$ and $k \geq 1$,  this is impossible
by construction of $J$. 
If $i=0$ or $k=0$, this is also impossible by hypothesis (ii).
We thus have $\conv((2p_j)_{j \in J}) \subseteq \Newt(fg+1)$, and the proof is complete.
\end{proof}
We note that this proposition does not hold without assumption (ii), as shown by the following example:
take $f=1+X^2Y+XY^2+(1/2)X^2Y^4+(1/2)X^4Y^2$ and
$g=-1+X^2Y+XY^2-(1/2)X^2Y^4-(1/2)X^4Y^2$.
Then $fg+1=2X^3Y^3-(1/2)X^6Y^6-(1/4)X^4Y^8-(1/4)X^8Y^4$.
The monomial $X^3Y^3$ is a vertex of $\Newt(fg+1)$ but is not of the form $p_i$ or $2p_j$ prescribed by Proposition~\ref{sameprop}.

\begin{theorem} \label{sameth}
Under the same assumptions (i) and (ii) as above,
 $\Newt(fg+1)$ has at most $t+1$ edges where $t$ denotes the number of monomials of $f$ and $g$.
\end{theorem}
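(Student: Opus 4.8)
The plan is to peel off the easy cases and then lean on Proposition~\ref{sameprop}. If $f$ and $g$ have no constant term, then $fg$ has none either, so $\Mon(fg+1)=\{0\}\cup\Mon(fg)$ and, since $\Newt(fg)=\Newt(f)+\Newt(g)=2\,\Newt(f)$ has at most $t$ edges, $\Newt(fg+1)=\conv(\{0\}\cup\Newt(fg))$ has at most $t+1$ edges. If $f,g$ have a constant term but the constant term of $fg$ is not $-1$, then $\Mon(fg+1)=\Mon(fg)$ and $\Newt(fg+1)=2\,\Newt(f)$ has at most $t$ edges. So the only case needing work is the one isolated in the appendix: $f,g$ have a constant term and the constant term of $fg$ equals $-1$, i.e.\ $fg+1$ has no constant term. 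There Proposition~\ref{sameprop} applies and gives $\Newt(fg+1)=\conv(2p_1,\dots,2p_{t-1},(p_i)_{i\in I})$ with $p_0=0$.

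Write $Q=\conv(p_1,\dots,p_{t-1})$. If $0\in Q$ then $0\in 2Q$, hence $\conv(\{0\}\cup 2Q)=2Q$, and since each $p_i$ is the midpoint of $0$ and $2p_i$ we get $2Q\subseteq\Newt(fg+1)\subseteq 2Q$, so $\Newt(fg+1)=2Q$ has at most $t-1$ edges. So assume $0\notin Q$, hence $0\notin 2Q$. Using $p_i=\tfrac12(0)+\tfrac12(2p_i)$ once more,
\[2Q\ \subseteq\ \Newt(fg+1)\ \subseteq\ \conv(\{0\}\cup 2Q)=:R ,\]
and applying $\conv(\{0\}\cup\,\cdot\,)$ to these inclusions (and using $0\in R$) gives $\conv(\{0\}\cup\Newt(fg+1))=R$. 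Split $\partial Q$ into the arc $N$ facing the origin and the complementary arc $F$; they share the two vertices $p_a,p_b$ at which the tangent lines from $0$ touch $Q$, so if $N$ has $\nu$ vertices and $F$ has $\varphi$ vertices then $\nu+\varphi=t+1$. A short check with normals shows that the part of $\partial\Newt(fg+1)$ not facing the origin is exactly the doubled arc $2F$: for an outer normal $u$ with $\max_{q\in Q}\langle u,q\rangle>0$, attained at $p_l$, one has $\langle u,2p_l\rangle=2\langle u,p_l\rangle>\langle u,p_l\rangle\ge\langle u,p_i\rangle$ for all $i$, so the maximiser over the generators is $2p_l$, and as $u$ turns these $p_l$ run through the $\varphi$ vertices of $F$. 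Thus $2F$ contributes $\varphi$ vertices and $\varphi-1$ edges, and the rest of $\partial\Newt(fg+1)$ is a convex chain joining $2p_a$ to $2p_b$.

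It remains to bound by $\nu$ the number of vertices of this chain that lie strictly between $2p_a$ and $2p_b$: that gives a total of at most $(\varphi-1)+(\nu+1)=t+1$ edges. I would prove the bound by a normal‑fan argument. As the outer normal $u$ turns through the interior of the cone $\mathcal S=\{u:\langle u,q\rangle\le 0\text{ for all }q\in Q\}$ (the directions strictly separating $0$ from $Q$), the generator maximising $\langle u,\cdot\rangle$ traces precisely the interior vertices of the chain. For such $u$ one has $\langle u,p_l\rangle<0$ for every $l$, hence $\langle u,2p_l\rangle=2\langle u,p_l\rangle<\langle u,p_l\rangle$; therefore a doubled point $2p_l$ can be the maximiser only when $l\notin I$, and then $p_l$ is forced to be the $Q$‑maximiser of $\langle u,\cdot\rangle$, i.e.\ a vertex of $N$, to which we charge $2p_l$. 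A surviving vertex of the form $p_i$ ($i\in I$) satisfies $\langle u,p_i\rangle\ge 2\max_{q\in Q}\langle u,q\rangle$, which confines the sub‑cone of normals realising it and also lets one charge $p_i$ to a vertex of $N$. The crux is to make all these charges simultaneously \emph{injective} into the $\nu$ vertices of $N$ — a monotonicity statement along the chain — and this is exactly where hypothesis (ii), already used to obtain the clean description of Proposition~\ref{sameprop}, is needed; I expect this injectivity step to be the delicate part, everything else being bookkeeping. (When $I=\{1,\dots,t-1\}$ the chain is literally $2p_b,p_b,\dots,p_a,2p_a$, listing the vertices of $N$ in order between the two doubled endpoints, which shows the bound $t+1$ is attained.)
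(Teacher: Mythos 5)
Your handling of the first two cases matches the paper, and the reduction to the case where $f,g$ have constant terms while $fg+1$ does not is correct. However, in that remaining case your argument has a genuine gap, and you say so yourself: after setting up the split into the ``far'' doubled arc $2F$ (which does cover the directions $u$ with $\max_{q\in Q}\langle u,q\rangle>0$) and a ``near'' chain, the whole proof hinges on an injective charging of the near-chain vertices to the $\nu$ vertices of $N$, and you explicitly leave that step as an expectation rather than an argument. The worry is real: a near-chain vertex can be either $2p_l$ with $l\notin I$ (which you charge to the $Q$-maximiser $p_l$), or a surviving $p_i$ with $i\in I$ (which you say ``also lets one charge $p_i$ to a vertex of $N$'', without specifying to which vertex, or why distinct surviving $p_i$'s and distinct doubled points get charged to distinct vertices of $N$). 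In fact $p_i$ need not even be a vertex of $Q$: it can lie in the interior of $Q$ and still be a vertex of $\conv(2Q\cup\{p_j:j\in I\})$ because $0$ is not a generator, so the intended charge for such a $p_i$ is not determined by your setup. As it stands the bound $\nu'\le\nu$ is asserted, not proved.

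The paper resolves this case much more economically. Starting from Proposition~\ref{sameprop}, it selects index sets $J\subseteq\{1,\dots,t-1\}$ and $K\subseteq I$ so that $(2p_j)_{j\in J},(p_k)_{k\in K}$ are the (convexly independent) vertices of $\Newt(fg+1)$, and then bounds the count by $|J|+|K|=|J\cup K|+|J\cap K|\le (t-1)+|J\cap K|$. The key step is the short lemma that for any three distinct nonzero points $p,q,r$ the six points $p,q,r,2p,2q,2r$ cannot be convexly independent, which forces $|J\cap K|\le 2$ and gives the bound $t+1$ immediately. This sidesteps entirely the normal-fan bookkeeping you set up. If you want to pursue your route, the missing lemma you need is essentially an injectivity statement that plays the same role as the paper's six-point lemma, and you would do well to prove that lemma directly rather than via the cone decomposition; but as submitted the proposal is incomplete at its central step.
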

\begin{proof}
We continue to denote the common support of $f$ and $g$ by $\{p_0,\ldots,p_{t-1}\}$.
If $0$ does not belong to this support then $\Newt(fg+1)$ is the
convex hull of $\{0\}$ and  $\Newt(fg)$. 
Moreover, $\Newt(fg)=\Newt(f)+\Newt(g)=\conv(2p_0,\ldots,2p_{t-1})$.

If $0$ is in the support and $fg+1$ has a constant term then 
$\Newt(fg+1)=\Newt(fg)$ has at most $t$ edges ($t$ and not $2t$ since $f$ and $g$ have
the same support).

In the remaining case ($0$ is in the support but $fg+1$ has no constant term)
we need to use hypothesis (ii). This case 
is treated in Proposition~\ref{sameprop}.
At first sight it seems that $\Newt(fg+1)$ can have up to $2(t-1)$ vertices,
but the list of possible vertices can be shortened by picking 
a convexly independent subsequence.
More precisely, write $\conv(2p_1,\ldots,2p_{t-1},(p_i)_{i \in I})=\conv((2p_j)_{j \in J},(p_k)_{k \in K})$ where $J \subseteq \{1,\ldots,t-1\}$ and $K \subseteq I$ are chosen so that the points in this
sequence are convexly independent. By the lemma below,
$| J \cap K| \leq 2$. As a result, 
the number of points in the sequence is 
 $|J|+|K| = |J \cup K| + | J \cap K| \leq (t-1)+2=t+1.$
\end{proof}

\begin{lemma}
If $p$, $q$, $r$ are 3 distinct nonzero points in the plane then the 6 points
$p$, $q$, $r$, $2p$, $2q$, $2r$ are not convexly independent.
\end{lemma}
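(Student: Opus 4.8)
The plan is to argue by contradiction: suppose $p$, $q$, $r$, $2p$, $2q$, $2r$ are six distinct points in convex position. The key geometric observation is that each of $p$, $q$, $r$ is the midpoint of the segment from the origin $0$ to $2p$, $2q$, $2r$ respectively; so $p \in [0,2p]$, $q \in [0,2q]$, $r \in [0,2r]$. I would first handle the degenerate possibility that two of these six points coincide (e.g. $p = 2q$), which would already contradict the hypothesis that the six points are distinct, and also the sub-case where $0$ lies inside or on the boundary of $\conv(2p,2q,2r)$ versus outside; the cleanest route is to avoid splitting too much by using a dilation/ordering argument as follows.

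First I would consider the triangle $T = \conv(2p, 2q, 2r)$ (it is a genuine triangle, or else $2p,2q,2r$ are collinear, hence so are $p,q,r$, and six collinear points are never convexly independent). For the six points to be in convex position, the three "midpoints" $p, q, r$ must all be vertices of $\conv(\{p,q,r,2p,2q,2r\})$, in particular none of them may lie in the interior or on an edge of $T$. But $p$ is the midpoint of $0$ and $2p$: if $0 \in T$, then $p = \tfrac12(0) + \tfrac12(2p)$ is a convex combination of $0 \in T$ and $2p \in T$, so $p \in T$, and moreover $p$ is interior to $T$ unless $0$ or $2p$ lies on $\partial T$ in a way that places $p$ on an edge — in either case $p$ is not a vertex of the six-point hull, contradiction. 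So I may assume $0 \notin T$.

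Now $0 \notin T$, so there is a line $\ell$ through $0$ with all of $2p, 2q, 2r$ strictly on one side (a separating/supporting line at $0$ for $T$), hence also $p, q, r$ strictly on that same side, since each is a positive multiple of the corresponding doubled point. Pick a direction $u$ normal to $\ell$ pointing into that half-plane and consider the linear functional $\varphi(x) = \langle u, x\rangle$; then $0 < \varphi(p) = \tfrac12\varphi(2p)$, etc. Let $2q$ be the point among $2p,2q,2r$ maximizing $\varphi$ (WLOG), so $\varphi(2p), \varphi(2r) \le \varphi(2q)$, hence $\varphi(p), \varphi(r) < \varphi(2q)$ as well (strictly, using $\varphi(p) = \tfrac12\varphi(2p) \le \tfrac12\varphi(2q) < \varphi(2q)$). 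Thus $q$ is the unique $\varphi$-maximizer among the six points would be the claim to reach a contradiction — but actually I want the opposite: I would instead take the point minimizing $\varphi$ among $\{p,q,r\}$, say $q$, and show $q$ lies in $\conv(0', 2q, \ldots)$; the cleanest finish is: since $\varphi(q) = \tfrac12 \varphi(2q) > 0$ and $\varphi(q) \le \varphi(p), \varphi(r)$, while $\varphi(2q) = 2\varphi(q)$ exceeds... — I would organize the final contradiction as showing that the point among the six with the second-smallest value of $\varphi$ (or an analogous extremal selection in a generic direction) must lie in the convex hull of the others, because the "small" points $p,q,r$ are squeezed between $0$ (not among our six, but still) — more robustly: choose a generic linear functional $\psi$ and order the six points; the point of rank $2$ from the bottom is a convex combination of the rank-$1$ point and some point above it only if it lies on the correct side, and one checks using $p \in [0,2p]$ that one of $p,q,r$ is caught inside the triangle formed by $0$ and two of the doubled points, hence (since that triangle is contained in $\conv(\{0\} \cup \{2p,2q,2r\})$, and we can discard $0$) inside $\conv$ of the remaining five of our six points. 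This is the main obstacle: pinning down exactly which of the three midpoints is the "trapped" one requires a short case analysis on the cyclic order of $2p, 2q, 2r$ as seen from $0$, and I expect that to be the only genuinely fiddly part; everything else is the dilation identity $p = \tfrac12 \cdot 0 + \tfrac12 \cdot 2p$ plus the elementary fact that a point lying in the convex hull of the other five cannot be a vertex of a convex hexagon.
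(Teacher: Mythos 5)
Your plan gets the right opening move---the dilation identity $p=\tfrac12\cdot 0+\tfrac12\cdot 2p$---and the case $0\in T:=\conv(2p,2q,2r)$ is essentially correct: $p$ is then a strict convex combination of two points of $T$, hence not extreme in $T$, hence not extreme in the hull of the six points. (One quibble: you appeal to ``the hypothesis that the six points are distinct,'' but the lemma makes no such hypothesis; if two of the six coincide, the conclusion is simply vacuous, which is the observation you actually need.)

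The problem is the other case, $0\notin T$, which is where the real content of the lemma lies, and your write-up does not finish it. After introducing a separating line and a linear functional $\varphi$ you change your mind twice about which extremizer to examine, and you end by acknowledging that ``pinning down exactly which of the three midpoints is the `trapped' one requires a short case analysis'' that you do not carry out. Ordering six points by a generic $\varphi$ does not by itself tell you anything about convex position, so as written this is a gap rather than a proof with details omitted. The paper closes this case with a much shorter argument that avoids functionals entirely: it splits on whether the four points $0,p,q,r$ are convexly independent. If they are, say they form a quadrilateral $0prq$; then the diagonal $pq$ separates $0$ from $r$, and since $r$ is the midpoint of $[0,2r]$ one checks $r\in\conv(p,q,2r)$. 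If they are not, say $r\in\conv(0,p,q)$ (the only other possibility, $0\in\conv(p,q,r)$, is exactly your easy case $0\in T$); then a one-line computation with barycentric coordinates gives $2r\in\conv(2p,2q,r)$. That two-case split is what your ``cyclic order as seen from $0$'' intuition is reaching for, and it is worth knowing because it replaces the open-ended case analysis you were anticipating with a single clean dichotomy on the four points $0,p,q,r$.
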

This is clear from a picture and can be proved for instance by 
considering the 4 points $0$, $p$, $q$, $r$.
There are two cases.
\begin{enumerate}
\item If these 4 points are convexly independent, 
assume for instance that $pq$ is a  diagonal
of the quadrangle $0prq$. 
Then the line $pq$ separates $0$ from $r$. As a result, $r \in \conv(p,q,2r)$.

\item If the 4 points are not convexly independent, assume for instance that
$r \in \conv(0,p,q)$. In this case, $2r \in \conv(2p,2q,r)$.~$\Box$
\end{enumerate}

\end{document}